\documentclass[12pt, a4paper]{article}

\bibliographystyle{asa}

\usepackage[latin1]{inputenc}
\usepackage{vmargin}
\usepackage{amsfonts}
\usepackage{amssymb}
\usepackage{graphicx}
\usepackage{amsmath}
\usepackage{amsthm}
\usepackage{setspace}
\usepackage{multirow}
\usepackage{hyperref}
\usepackage{cases}
\usepackage[round]{natbib}

\newtheorem{theorem}{Theorem}[section]
\newtheorem{lemma}{Lemma}[section]
\newtheorem{proposition}{Proposition}[section]
\newtheorem{remark}{Remark}[section]

\newtheorem{assumption}{Assumption}[section]

\newtheorem{example}{Example}

\let\Item\item

\renewcommand\enddescription{\endlist\global\let\item\Item}

\newcommand{\E}{\mathbb{E}}

\def\one{\mathbf{1}}

\usepackage[bottom]{footmisc}
\setmarginsrb{2cm}{3cm}{2cm}{2cm}{0cm}{0cm}{0cm}{0.5cm}
\setlength\parindent{0pt}
\title{Ergodicity and diffusivity of Markovian\\
 order book models: a general framework}

\author{Weibing Huang$^{1,2}$ and Mathieu Rosenbaum$^{1}$\\$~~$\\
$^{1}$ LPMA, University Pierre et Marie Curie (Paris 6)\\
$^{2}$ Kepler-Cheuvreux}

\date{\today}

\begin{document}

\maketitle

\begin{abstract}
\noindent We present a general Markovian framework for order book modeling. Through our approach, we aim at providing a tool enabling to get a better understanding of the price formation process and of the link between microscopic and macroscopic features of financial assets. To do so, we propose a new method of order book representation, and decompose the problem of order book modeling into two sub-problems: dynamics of a continuous-time double auction system with a fixed reference price; interactions between the double auction system and the reference price movements. State dependency is included in our framework by allowing the order flow intensities to depend on the order book state. Furthermore, contrary to most existing models, the impact of the order book updates on the reference price dynamics is not assumed to be instantaneous. We first prove that under general assumptions, our system is ergodic. Then we deduce the convergence towards a Brownian motion of the rescaled price process. 
\end{abstract}

\section{Introduction}

Nowadays, most financial exchanges use a limit order book (LOB) mechanism. In these order-driven markets, market participants send their buy and sell orders to a continuous-time double auction system, where orders are matched according to their price and time priority. Understanding the LOB dynamics is one of the fundamental issues in the field of market microstructure and leads to many interesting applications in optimal execution, design of electronic trading algorithms, minimization of market impact costs, short-term predictions and regulation. In the recent years, many works have been devoted to the description of order book dynamics. Order book models can be essentially divided into two types: economic models, where one focuses on the behaviors of individual agents and their optimal decisions, see for example \citet*{parlour1998price}, \citet*{foucault1999order} and \citet*{rocsu2009dynamic}; statistical models, where the order flows are seen as random processes, see \citet*{smith2003statistical}, \citet*{cont2010stochastic}, \citet*{abergel2011mathematical}, \citet*{Cont2010}, \citet*{Lakner2013}, \citet*{Lachapelle2014}, \citet*{Bayer2015} and \citet*{abergel2015}. With the notable exception of  \citet*{abergel2015}, where the authors consider the case of Hawkes-type dynamics, these models usually assume Poisson flows for the order arrival processes. Such assumption is mainly made for technical reasons, since it is well-known that it is not consistent with market data. In \citet*{huang2013simulating}, the authors propose to replace the Poisson assumption by a state-dependent approach where the intensities of the flows depend on the state of the LOB. This model, called ``Queue-reactive'' model, provides new insights for the order book dynamics, such as market participants behaviors conditional on different states of the order book, the LOB's asymptotic form and the bid-ask spread distribution. It is also a very relevant tool for practitioners in the perspective of transaction cost analysis of complex trading algorithms. In this paper, we aim at extending the Queue-reactive model to a more general framework, in which most of the existing statistical models can be included (up to minor modifications). Our goal is to give some theoretical results on the system's ergodicity as well as the asymptotic scaling limit of the price process. \\

In the LOB, price levels are discretized by a minimum price change unity called the tick value (denoted by $\alpha$). Market participants can place their buy/sell orders at any level which is a multiple of the tick value and these orders will either stay in the LOB (a buy order with price lower than the current best ask price, or a sell order with price higher than the current best bid price, this type of orders being called ``limit order''), or be matched with the existing orders in the LOB (this type of orders being called ``market order''). The LOB, as its name suggests, is composed of all unmatched limit orders and can be seen as a (rough) approximation of the current microstructural supply and demand on the different price levels. \\

Current statistical models differ in their way of representing the LOB. In \citet*{cont2010stochastic}, the price grid is supposed to be finite $(n_{min}\alpha,...,n_{max}\alpha)$, and the LOB is represented by a $n_{max}-n_{min}+1$ dimensional vector that records the buying/selling quantities at each of these price levels. In such representation, the different limits are indexed by their absolute price level. In practice, to cover the intra-day price range $[p_{min}, p_{max}]$, the dimension of the state space have to be at least $\frac{p_{max}-p_{min}}{\alpha}+1$\footnote{Of course if we choose $n_{min} = 1$ and $n_{max} = \infty$, we have a complete description of all the available buying/selling offers in the whole price grid $\alpha \mathbb{N}$. }, which is typically a very large number. Another way of representing the order book state is to use the relative indexing method. Following ideas from the Zero-intelligence model of \citet*{smith2003statistical}, \citet*{abergel2011mathematical} propose to use the best bid and the best ask prices as two reference prices to index the limits. In that case, the LOB is made of the following elements: the two reference prices $p_{bestbid}$ and $p_{bestask}$, and the limits around them which are two $K$ dimensional vectors $a$ (for the ask side) and $b$ (for the bid side). The vector $a = [a_1,...,a_K]$ records the limit sizes at the price levels $[p_{bestbid} + \alpha, ..., p_{bestbid} + K\alpha]$, while the vector $b = [b_1,...,b_K]$ records the limit sizes at the price levels $[p_{bestask} - \alpha, ..., p_{bestask} - K\alpha]$\footnote{Note that in this representation, there can be overlaps between the vector $a$ and $b$ when the bid-ask spread is larger than one tick.}. In practice, observing a market depth of five ticks is usually considered enough for most trading purposes. Consequently, for a typical stock with spread size of order five ticks, the value of $K$ should be generally of order 10 so that essential information from the LOB is captured. Thus, the use of these two reference prices reduces significantly the dimensionality of the state space. Note that in this representation, the index of a limit at a given price level is no longer constant. Therefore, appropriate boundary conditions must be defined to deal with price changes. \\ 

In this paper, we propose an original representation of the order book, using only one reference price which is not necessarily directly observable from the order book state. We view this reference price as sort of market consensus about the underlying ``efficient" price used by market participants when making their trading decisions. We keep $K$ limits on each side of the reference price and the LOB is fully described by a $2K+1$ dimensional vector, which is then modeled by a continuous-time Markov jump process. The use of this unique reference price gives us a lot of flexibility when modeling the order book. Since the reference price is no longer directly determined by the order book state, we can differentiate two types of jumps in the Markov process: pure order book state jumps (jumps for which the reference price stays constant) and common jumps (jumps for which a reference price change is involved). For large tick assets\footnote{A large tick asset is defined as an asset whose bid-ask spread is almost always equal to one tick.}, such decomposition is proved to be very relevant when studying the conditional dependences between the dynamics of the LOB and its state, see \citet*{huang2013simulating}. Moreover, in this framework, we are able to easily incorporate exogenous price movements into the order book dynamics. This can be simply done adding a reference price jump component which is independent of the order book state. \\

At the high frequency scale, the LOB state is one of the two public information that are accessible to traders and their automates (the other being the history of the order flows). Thus it plays a very important role in their trading decisions. In our framework , the LOB is assumed to be a continuous-time Markov jump process, and the influence of the LOB state on the incoming flows is modeled through a state-dependent infinitesimal generator matrix for the jump process. Indeed, in practice, traders essentially rely on information deduced from the current LOB state when deciding to send an order at a specific price level. Various simplifying assumptions on the information set used by traders can be considered in our framework in order to facilitate the empirical studies. The index of a limit, for example, is probably one of the most important elements in their decision process, as it gives the distance between the target price and the reference price. Influence of other variables, such as the target limit's size, its relative distance to the current best offer queues and the size of its opposite queue is studied in \citet*{huang2013simulating}, and are shown to also have non-negligible effects on the dynamics of the order flows. \\

Under appropriate assumptions, the Queue-reactive model can be easily estimated using empirical data. It provides many interesting new insights on the origin of some micro-structural properties, such as the stylized empirical distribution of the LOB state. It has been shown in \citet*{Bouchaud2002statistical} that there exist some regularities in the order book's empirical form, that is the average value of the LOB state (a $2K$ dimensional vector in our model). From a theoretical point of view, these regularities are closely linked with the notion of ergodicity of the LOB system (the exact definition of ergodicity will be given in Section \ref{vue}): Ergodicity ensures the convergence of the LOB state distribution towards an invariant probability measure. Thus the stylized form observed on market data might be explained by a law of large number type phenomenon for this invariant distribution. This hypothesis is supported by empirical studies in \citet*{huang2013simulating}, in which the authors compare the theoretical asymptotic distributions in our model with empirical estimations, and show that they are very close. In \citet*{huang2013simulating}, some assumptions are made to ensure the LOB system's ergodicity. In this paper, we want to generalize them in an extended framework where the volume of the orders is no longer constant and the influence of the order book state on the dynamics of the reference price may not be instantaneous. \\

Another important element in order book modeling is the asymptotic behavior of the price. Such analysis is very relevant as it provides useful insights on the price formation process, and links the dynamics at the microscopic level with macroscopic features of the asset, such as its volatility. We prove that in our framework, the rescaled reference price process converges to a Brownian motion. An expression for the macroscopic volatility in terms of the flow rates is derived using a functional central limit theorem together with the strong mixing property of the price increments, in the spirit of \cite{abergel2011mathematical}. \\

The paper is organized as follows. In Section 2, we set up the general Markovian framework. Section 3 discusses the ergodicity properties of the model. The diffusive limit of the rescaled price process is stated in Section 4. Finally, in Section 5, we give some specific models which can be seen as particular cases of our framework. The technical proofs are gathered in an appendix.

\section{A general Markovian framework}

\subsection{Representation of the order book}

In our framework, the order book is made of two elements: the center position: a certain reference price (note that here the center position is not necessarily the mid price) and the shape of the book (the queue sizes around the reference price), see Figure \ref{fig1} for an example. The center position, denoted by $p_{ref}\in \{n\alpha+0.5\alpha, n \in \mathbb{Z}\}$\footnote{For the generality of the framework, we allow for negative prices, however, in practice, the model should of course be used over a reasonably small time interval so that prices remain positive.}, can be seen as the current consensus price level and is used to index the limits. We write $Q_{\pm i}$ for the queue at the price level $p_{ref} \pm (i-0.5)\alpha$ and denote its size by $q_{\pm i}$ and its price by $p_i$. Then the total order book's shape at time $t$ is an infinite vector $q(t) = [...,q_{-k}(t),...q_{-1}(t),q_1(t),...,q_k(t),...]$, with $q_i \in \mathbb{Z}$, and $|q_i|$ is the number of orders at the limit $Q_i$ ($q_i < 0$ if these orders are bid orders and $q_i > 0$ if they are ask orders. Note that in such a representation, one can have $q_i\geq 0$ or $q_i < 0$ for all $i$). The LOB information at time $t$ is therefore fully represented by $X(t) := (q(t), p_{ref}(t))$, $t\geq 0$. To restrict the value of $X(t)$, we consider only $K$ limits on each side, and thus have now $q(t) = [q_{-K}(t),...q_{-1}(t),q_1(t),...,q_K(t)]$ and $X(t) \in \mathbb{Z}^{2K} \times \alpha(0.5+\mathbb{Z})$. \\

\begin{figure}
\centering
\includegraphics[scale=0.6]{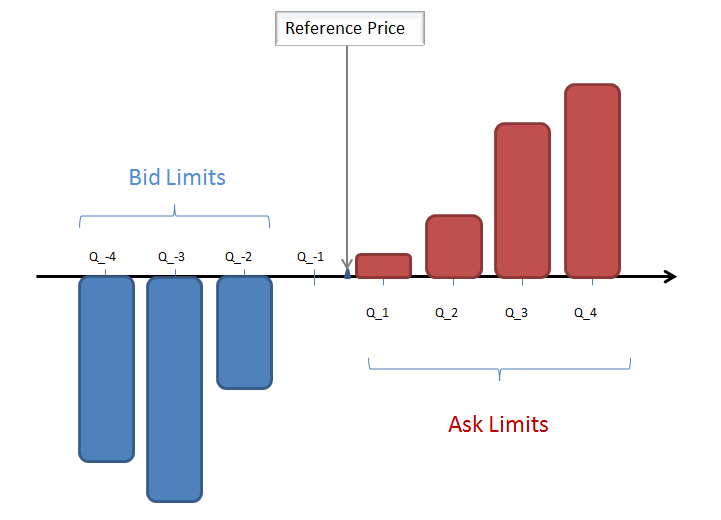}
\caption{Limit order book}\label{fig1}
\end{figure}

The state space of $q(t)$ is actually smaller than $\mathbb{Z}^{2K}$. More specifically, let us define
\begin{eqnarray}
i_{bestbid}(q) &=& \mbox{max}(-K-1, \mbox{sup}\{i | q_i < 0\}) \nonumber \\
i_{bestask}(q) &=& \mbox{min}(K+1, \mbox{inf}\{i | q_i > 0\}). \nonumber
\end{eqnarray}

The state space $\Omega$ of $q(t)$ is defined as the set of all $q \in \mathbb{Z}^{2K}$, such that: for all $i \in \{-K,...,K\}$, $q_i \leq 0$ if $i \leq i_{bestbid}(q)$ and $q_i \geq 0$ if $i \geq i_{bestask}(q)$.\\

\subsection{Dynamics of the order book}

In our general framework, we model the LOB vector $X(t)$ as a continuous-time Markov jump process, whose infinitesimal generator matrix $\mathcal{Q}$ will be given in Equation \eqref{eqa4}. We differentiate two types of jumps in the order book dynamics: pure order book state jumps (for which $p_{ref}$ stays constant) and common jumps (for which a change in the value of $p_{ref}$ is involved). 

\subsubsection*{Pure order book state jump}

There are three\footnote{Four if we also consider the modification orders. We view modification orders as a combination of a cancellation and an insertion order that arrive in a very short time interval.} types of orders that interact directly with the order book and trigger pure order book state jumps: 
\begin{itemize}
\item Limit orders: insertion of a new order in the order book (a buy order at a lower price than the best ask price, or a sell order at a higher price than the best bid price).
\item Cancellation orders: cancellation of an already existing order in the order book.
\item Market orders: consumption of available liquidity (a buy or sell order at the best available price).
\end{itemize}

In the seminal work of \citet*{smith2003statistical}, the arrival times of the above three types of orders at different price levels are assumed to be mutually independent and exponentially distributed. Furthermore, each order has unit size. In our approach, the size of the jumps, which represents the amount of volume inserted to/removed from the LOB for a given event, is random. Moreover, the arrival rate of a given jump is assumed to be function of the index of the target price, the current LOB state vector $q(t)$, the direction of the jump and its size. That is, for any $q,q' \in \Omega$ ($q \neq q'$), $p \in \alpha(0.5+\mathbb{Z})$, $n \in \mathbb{N}^{+}$ and any $e_i = (a_{-K},...,a_{i},...,a_{K})$ ($a_j = 0$ for $j \neq i$ and $a_{i} = 1$), we have in cases where $q+ne_i \in \Omega$ and $q-ne_i \in \Omega$: 
\begin{eqnarray}
\mathcal{Q}_{(q,p),(q+ne_i,p)} &=& f_i(q,n) \nonumber\\
\mathcal{Q}_{(q,p),(q-ne_i,p)} &=& g_i(q,n)  \nonumber\\
\tilde{\mathcal{Q}}_{(q,p),(q',p)} &=& 0 \mbox{, otherwise,} \label{eqa1}
\end{eqnarray}
where the $f_i$ and $g_i$ are $2K$ functions: $\Omega \times \mathbb{N}^{+} \to \mathbb{R}^{+}$. 

\noindent Note that in \eqref{eqa1}, $f_i(q,n)$ and $g_i(q,n)$ have different meanings for different $i$ and $q$. For example, when $i \geq i_{bestask}$, $f_i(q,n)$ represents the arrival rate of sell limit orders of size $n$, and $g_i(q,n)$ the sum of the rate of cancellations of size $n$ and the arrival rate of market buy orders of size $n$. When $i \leq i_{bestbid}$, the role of $f_i(q,n)$ and $g_i(q,n)$ are switched. Note also that $q\pm ne_i$ is not always in the state space $\Omega$ even when $q \in \Omega$. Thus some values of the functions $f_i$ and $g_i$ are not needed in Equation \eqref{eqa1} and assumed to be equal to zero. Furthermore, so that there is no absorbing state, we assume
$$\sum_i\sum_n\big(f_i(q,n)+g_i(q,n)\big)>0.$$

\subsubsection*{Common jumps}

The reference price $p_{ref}$ can be viewed as a consensus value on the ``efficient" price and takes discretized values in $\alpha (0.5+\mathbb{Z})$. In practice, this reference price is built based on two sets of information: the current state of the LOB and the historical order flows. We have in mind that $p_{ref}$ moves in a Markovian manner, so its dynamics depends on the present information only, that is the current state of the LOB. In our framework, we restrict the price jump size to one tick at each time. We use two functions $u,d: \Omega \to \mathbb{R}^{+}$ to describe respectively the rate of positive and negative jumps: 

\begin{eqnarray}
\sum_{q' \in \Omega} \mathcal{Q}_{(q,p),(q',p+\alpha)} &=& u(q) \nonumber\\
\sum_{q' \in \Omega} \mathcal{Q}_{(q,p),(q',p-\alpha)} &=& d(q) \nonumber\\
\sum_{q' \in \Omega} \mathcal{Q}_{(q,p),(q',p \pm n\alpha)} &=& 0, \mbox{ for } n \geq 2. \label{eqa2}
\end{eqnarray}


To understand Equation \eqref{eqa2}, let us first consider the following simple example where the LOB state information is summarized by the difference between the current value of $p_{ref}$ and the mid price $p_{mid}$: 

\begin{example}\label{ex1}
\begin{eqnarray}
i_{mid} &=& ({i_{bestbid}}+{i_{bestask}})/2, \nonumber \\
u(q) &=& \theta_0+\theta_1 \max(0,\alpha( i_{mid} - 0.5)) \nonumber \\
d(q) &=& \theta_0+\theta_1 \max(0,-\alpha( i_{mid} + 0.5 )) \nonumber,
\end{eqnarray}
with $\theta_0\geq 0$ and $\theta_1$ a positive constant representing the intensity of the adjustment of $p_{ref}$ towards $p_{mid}$.
\end{example}

In the above example, we assume that the reference price jump rate depends on the deviation of the current value of $p_{ref}$ from $p_{mid}$. Indeed, $p_{mid}$ is often considered an approximation of the LOB center implied by its current state. We may also include other LOB information such as the available quantities at $Q_{i_{bestbid}}$ and $Q_{i_{bestask}}$ when defining $u(q)$ and $d(q)$. Such additional variables increase the complexity of our model but make it more realistic. 


\begin{remark}
In the already mentioned Queue-reactive model, changes of $p_{ref}$ are triggered by the order book events that modify $p_{mid}$, while in Equation \eqref{eqa2}, they are driven by the order book state. Although the Queue-reactive model cannot exactly be seen as a particular case of the framework presented in this paper (see Section \ref{specific}), most of the theoretical results shown in Sections 3 and 4 still hold (with some minor modifications in some assumptions). In particular, one can prove the diffusive limit of the reference price in the Queue-reactive model applying almost the same method as that used here.
\end{remark}

As soon as $p_{ref}$ changes, the value of $q_i$ switches immediately to the value of one of its neighbors (right if $p_{ref}$ increases, left if it decreases). As we keep only $K$ limits on each side, two boundary distributions $\pi_{-K}$ and $\pi_{K}$ are introduced for generating the new queue sizes at $Q_{-K}$ (when $p_{ref}$ decreases) and $Q_K$ (when $p_{ref}$ increases). To possibly incorporate external information, we moreover assume that with probability $\theta^{reinit}$, the LOB state vector $q(t)$ is redrawn from some distribution ($\pi^{inc}$ if $p_{ref}$ increases, $\pi^{dec}$ if $p_{ref}$ decreases) when $p_{ref}$ changes. As shown in \citet*{huang2013simulating}, models where price dynamics are purely endogenous, driven by order flows only, are usually not able to reproduce some of the important macroscopic features of prices, such as the volatility.  Thus the parameter $\theta^{reinit}$ can be understood as the percentage of price changes due to exogenous information, in which case market participants readjust very quickly their order flows around the new reference price, as if a new state of the LOB was drawn from its (invariant) distribution (ergodicity conditions are discussed in the next section). \\

For $q \in \Omega$, we write $q^+ = [q_{-K+1},...,q_{-1},q_{1},...,q_{K}]$, $q^- = [q_{-K},...,q_{-1},q_{1},...,q_{K-1}]$, $[q^+, l] = [q_{-K+1},...,q_{-1},q_{1},...,q_{K},l]$ and $[l,q^-] = [l,q_{-K},...,q_{-1},q_{1},...,q_{K-1}]$. Under the above assumptions, we have for $l\in \mathbb{Z}$ and $q,q',q'' \in \Omega$ such that $q'^+ \neq q^+$ and $q''^- \neq q^-$: 

\begin{eqnarray}
\mathcal{Q}_{(q,p),([q^+, l],p+\alpha)} &=& (1-\theta^{reinit})u(q)\pi_{K}(l) + \theta^{reinit}u(q)\pi^{inc}([q^+, l]) \nonumber\\
\mathcal{Q}_{(q,p),(q',p+\alpha)} &=& \theta^{reinit}u(q)\pi^{inc}(q') \nonumber\\
\mathcal{Q}_{(q,p),([l,q^-],p-\alpha)} &=& (1-\theta^{reinit})d(q)\pi_{-K}(l) + \theta^{reinit}d(q)\pi^{dec}([l,q^-]) \nonumber\\
\mathcal{Q}_{(q,p),(q'',p-\alpha)} &=& \theta^{reinit}d(q)\pi^{dec}(q''). \label{eqa3}
\end{eqnarray}

\subsubsection*{The infinitesimal generator matrix of $X(t)$}

Equations \eqref{eqa1}, \eqref{eqa2} and \eqref{eqa3} give a complete description of the infinitesimal generator matrix $\mathcal{Q}$ of the process $X(t)$, which is summarized in the following assumption. 

\begin{assumption}
Let $q, q', q'', \tilde{q} \in \Omega$, $p, \tilde{p} \in \alpha(0.5+\mathbb{Z})$, $n \in \mathbb{N}^{+}$, $l \in \mathbb{Z}$ be such that $q'^+ \neq q^+$ and $q''^- \neq q^-$. The process $X(t)$ is an irreducible Markov jump process with aperiodic embedded chain whose infinitesimal generator matrix $\mathcal{Q}$ is of the following form (with $2K$ functions $f_i,g_i: \Omega \times \mathbb{N}^{+} \to \mathbb{R}^{+}$ and two functions $u,d: \Omega \to \mathbb{R}^{+}$) : 
\begin{eqnarray}
\mathcal{Q}_{(q,p),(q+ne_i,p)} &=& f_i(q,n) \nonumber\\
\mathcal{Q}_{(q,p),(q-ne_i,p)} &=& g_i(q,n) \nonumber\\
\mathcal{Q}_{(q,p),([q^+, l],p+\alpha)} &=& (1-\theta^{reinit})u(q)\pi_{K}(l) + \theta^{reinit}u(q)\pi^{inc}([q^+, l]) \nonumber\\
\mathcal{Q}_{(q,p),(q',p+\alpha)} &=& \theta^{reinit}u(q)\pi^{inc}(q') \nonumber\\
\mathcal{Q}_{(q,p),([l,q^-],p-\alpha)} &=& (1-\theta^{reinit})d(q)\pi_{-K}(l) + \theta^{reinit}d(q)\pi^{dec}([l,q^-]) \nonumber\\
\mathcal{Q}_{(q,p),(q'', p-\alpha)} &=& \theta^{reinit}d(q)\pi^{dec}(q'') \nonumber\\
\mathcal{Q}_{(q,p),(q,p)} &=& -\sum_{(\tilde{q},\tilde{p}) \in \Omega \times \alpha(0.5+\mathbb{Z}), (\tilde{q},\tilde{p}) \neq (q,p)} \mathcal{Q}_{(q,p),(\tilde{q},\tilde{p})}, \nonumber \\
\mathcal{Q}_{(q,p),(\tilde{q},\tilde{p})} &=& 0, \emph{ otherwise.} \label{eqa4}
\end{eqnarray} \label{ass1}
\end{assumption} 

Note that under Assumption \eqref{ass1}, the dynamics of the process $X(t)$ is invariant under translations of the LOB center position: its infinitesimal matrix generator $\mathcal{Q}$ satisfies: $$\mathcal{Q}_{(q^1,p^1),(q^2,p^1+\beta)} = \mathcal{Q}_{(q^1,p^2),(q^2,p^2+\beta)},$$ for any $q^1,q^2 \in \Omega$, $p^1, p^2 \in \alpha(0.5+\mathbb{Z})$ and $\beta \in \alpha \mathbb{Z}$. One can also remark that in our framework, the order book state process $q(t)$ alone is also a continuous-time Markov jump process, whose ergodicity is discussed in the next section.

\subsection{Comparison with existing models}

The first major difference between our approach and the existing Markovian models in the literature is the introduction of state dependency in the order book dynamics. Most of the current order book models follow the ``Zero-intelligence'' framework, using Poisson flows for the processes of order arrivals. The Poisson assumption is clearly unrealistic, see for example the empirical results in \citet*{huang2013simulating}. In our framework, we propose to incorporate the strategic behaviors of market participants via a mean-field game approach, assuming their decisions depend on an underlying ``efficient" price $p_{ref}$ and on the LOB state vector $q(t)$. Note also that Equation \eqref{eqa1} allows us to have jumps of random size in the order book's shape, while a constant jump size is often assumed in the other existing models. \\


We also introduce a new method of LOB representation using one unique reference price $p_{ref}$. Most models use the best bid and best ask prices as two reference prices for indexing the buy and sell limits. In such models, these reference prices are directly determined by the order book state. In particular, changes in the order book state are immediately carried on the values of these prices. In our framework, $p_{ref}$ is not necessarily deduced from the order book state. Therefore, we can assume that changes in the order book state affect the value of $p_{ref}$ with some delay rate (the functions $u$ and $d$ introduced in Equation \eqref{eqa2}). Thanks to this original representation, we can naturally decompose the order book dynamics into two parts: a continuous-time multidimensional queuing system (Equation \eqref{eqa1}) and the dynamics of its center, that is the reference price (Equations \eqref{eqa2} and \eqref{eqa3}). Compared with the Queue-reactive model introduced in \citet*{huang2013simulating}, $p_{ref}$ is no longer constrained within the bid-ask spread. This desirable feature gives us the possibility of separating the exogenous and endogenous parts in the price dynamics by choosing appropriate price jump rate functions $u$ and $d$.  For example, with the functions $u$ and $d$ of Example \ref{ex1}, one can interpret $\theta_0$ as the exogenous part in the dynamics of $p_{ref}$ and $\theta_1$ as the intensity of the endogenous effects driving $p_{ref}$ towards the current mid price level.

\section{Ergodicity}\label{vue}

In this section, we discuss ergodicity properties in our framework. To do so, we make some additional assumptions on the functions $f_i$, $g_i$, $u$ and $d$.\\

Let
$$P_t(x,A) := \mathbb{P}[Y(t) \in A | Y(0) = x]$$
be the transition probability at time $t$ of a continuous or discrete-time Markov process $Y$ with state space $\mathcal{Y}$. In this work, we say that the process $Y$ is \emph{V-uniformly ergodic} if there exists a coercive\footnote{$|V(x)| \to +\infty$ as $||x|| \to \infty$.} function $V > 1$, an invariant distribution $\pi$, $r \in (0,1)$ and $R > \infty$ such that for any $x \in \mathcal{Y}$ and $t \in \mathbb{R}^+$ (or $\mathbb{N}^+$ in discrete-time),  
\begin{equation}\label{def_erg}
||P_t(x,.)-\pi(.)||_V \leq Rr^tV(x),\end{equation} where we write $||.||_V$ for the $V$-norm of a signed measure, see \citet*{meyn1993stability,Meyn2009}. In continuous time, the main idea to prove such property is to design an appropriate Lyapunov function $V$: $\mathcal{Y} \to (1,\infty)$, on which the following negative drift condition is satisfied for some $\gamma>0$ and $B > 0$: 
\begin{eqnarray}
\mathcal{Q}V(y) &:=& \sum_{y' \neq y} \mathcal{Q}_{yy'}[V(y')-V(y)]\nonumber \\
&\leq& -\gamma V(y) + B. \nonumber
\end{eqnarray}

Then by Theorem 6.1 in \citet*{meyn1993stability}, the Markov process $Y$ is non-explosive and V-uniformly ergodic. Furthermore, by Theorem 4.2 in \citet*{meyn1993stability} it is positive Harris recurrent. Note that the same kind of method is used in \citet*{abergel2011mathematical} in order to show ergodicity properties of Zero-intelligence models.\\

As mentioned in the introduction, the LOB's ergodicity implies here the existence of a unique invariant distribution for the state vector $q$. This is relevant for explaining the stylized empirical distribution of the LOB state. Mostly, as we will see in Section \ref{sec::scalinglimit}, the ergodicity analysis is the basis for proving the diffusive limit of the reference price process.  

\subsection{When $p_{ref}$ stays constant}

We first discuss the V-uniform ergodicity of the process $q(t)$ when assuming $u(q) = d(q) =0$ in Equation \eqref{eqa4}. Recall that the unused values of $f_i(q,n)$ and $g_i(q,n)$ in the definition of the queue dynamics, that is when $q \pm ne_i \notin \Omega$, are set to zero. With the convention $0/0=0$, we define
\begin{eqnarray}
f_i^*(q) &:=& \sum_n f_i(q,n) \nonumber \\
g_i^*(q) &:=& \sum_n g_i(q,n) \nonumber \\
l_i(q,n) &:=& \frac{f_i(q,n)}{f_i^*(q)} \nonumber \\
k_i(q,n) &:=& \frac{g_i(q,n)}{g_i^*(q)} \nonumber \\
G^{f,i,q}(z) &:=& \sum_{n=1}^\infty z^n l_i(q,n) \nonumber \\
G^{g,i,q}(z) &:=& \sum_{n=1}^\infty z^n k_i(q,n). \nonumber
\end{eqnarray}

Thus, when $f_i^*(q)>0$ (resp. $g_i^*(q)>0$), $l_i(q,.)$ (resp. $k_i(q,.)$) is a probability measure on $\mathbb{N^+}$ with moment-generating function $G^{f,i,q}(z)$ (resp. $G^{g,i,q}(z)$). We make the four following assumptions.

\begin{assumption}
For any order book state $q$ and any $i \geq i_{bestask}$, $g_i(q,n) = 0$ for any $n > q_i$ and for any order book state $q$ and any $i \leq i_{bestbid}$, $f_i(q,n) = 0$ for any $n > -q_i$. \label{ass6}
\end{assumption}
\begin{assumption}\label{zstar}
There exists $z^*>1$ such that for any $q$ and $i$, $G^{f,i,q}(z^*)<\infty$ and $G^{g,i,q}(z^*)<\infty$. Furthermore, there exists $L>0$ such that for any $i$,  $$\varlimsup_{z \to 1^+} \sup_{q} [f_i^*(q) G^{f,i,q}(z)\one_{i > i_{bestbid}} +  g_i^*(q) G^{g,i,q}(z)\one_{i < i_{bestask}}] < L.$$\label{ass7}
\end{assumption}
\begin{assumption}
There exist $r>0$ and $U>1$ such that 
\begin{eqnarray}
\varlimsup_{z\to 1^+} \sup_{(q,i):q_i>U, i \geq i_{bestask}}[f_i^*(q) - g_i^*(q)\frac{1-G^{g,i,q}(z^{-1})}{G^{f,i,q}(z)-1}] &<& -r \nonumber \\
\varlimsup_{z\to 1^+} \sup_{(q,i):q_i<-U,i \leq i_{bestbid}}[g_i^*(q) - f_i^*(q)\frac{1-G^{f,i,q}(z^{-1})}{G^{g,i,q}(z)-1}] &<& -r.
\end{eqnarray}\label{ass8}
\end{assumption}
\begin{assumption}
For any $z>1$,
\begin{eqnarray}
B_f(z) &:=&  \inf_{(q,i):q_i>U,i \geq i_{bestask}}(G^{f,i,q}(z)-1) > 0 \nonumber \\
B_g(z) &:=&  \inf_{(q,i):q_i<-U,i \leq i_{bestbid}}(G^{g,i,q}(z)-1) > 0. \nonumber
\end{eqnarray}\label{ass8p2}
\end{assumption}

To understand the practical meaning of these assumptions, let us consider the following example where the pure order book state jumps are assumed to have constant size equal to one. In such situation, the four assumptions above can be rewritten as follows and are much easier to interpret. 

\begin{example}\label{lobex2}
LOB model with constant order size. 
\begin{enumerate}
\item[$\bullet$] For $n \geq 2$, $f_i(q,n) = g_i(q,n) = 0$ for any $q \in \Omega$. 
\item[$\bullet$] There exists $L>0$ such that for any $i \in \{-K, ..., K\}$ and $q \in \Omega$,  $$f_i(q,1)\one_{i > i_{bestbid}} +g_i(q,1)\one_{i < i_{bestask}} < L.$$
\item[$\bullet$]There exist $r>0$ and $U>1$ such that 
\begin{eqnarray}
\sup_{(q,i):q_i>U,i\geq i_{bestask}} [f_i(q,1)-g_i(q,1)] &<& -r \nonumber \\
\sup_{(q,i):q_i<-U,i\leq i_{bestbid}} [g_i(q,1)-f_i(q,1)] &<& -r. \nonumber
\end{eqnarray}
\end{enumerate}
\end{example}

Basically, Assumption \ref{ass6} says that a bid/ask limit cannot become an ask/bid limit in a single queue update event, that is the queue size cannot revert its sign in a single jump\footnote{This assumption is not really mandatory but is realistic and technically quite convenient.}. From Example \ref{lobex2}, we see that Assumption \ref{ass7} essentially states that the total intensity of the order insertion processes in the bid and ask side remains uniformly bounded with respect to the state of the LOB. Assumption \ref{ass8}, which is the most important for the system's ergodicity, forces the individual queue sizes $|q_i|$ to decrease when they become larger than a certain threshold. From these assumptions, we obtain the following theorem proved in appendix for the Markov process $q(t)$. 

\begin{theorem}
When $u = d =0$, under Assumptions \ref{ass1}, \ref{ass6}, \ref{ass7}, \ref{ass8} and \ref{ass8p2}, the continuous-time Markov jump process $q(t)$ is non-explosive, V-uniformly ergodic and positive Harris recurrent.\label{theo1}
\end{theorem}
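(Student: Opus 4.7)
The strategy is exactly the Foster--Lyapunov scheme outlined just before the theorem: produce a coercive function $V>1$ on $\Omega$ for which there exist $\gamma>0$ and $B<\infty$ with $\mathcal{Q}V(q)\le -\gamma V(q)+B$, and then invoke Theorems 4.2 and 6.1 of \citet{meyn1993stability} to simultaneously obtain non-explosion, positive Harris recurrence, and $V$-uniform ergodicity (irreducibility and aperiodicity of the embedded chain being built into Assumption \ref{ass1}). Because Assumptions \ref{ass7}, \ref{ass8}, \ref{ass8p2} are all phrased through the generating functions $G^{f,i,q}$ and $G^{g,i,q}$ at $z\to 1^+$, the natural candidate is the exponential-in-queue-size
$$V(q)=\sum_{i=-K,\,i\neq 0}^{K} z^{|q_i|},$$
for a single $z\in(1,z^*)$ to be chosen so small that the three $\varlimsup$ bounds translate into genuine inequalities. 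Since $V(q)\ge 2K$ and $V(q)\to\infty$ as $\|q\|\to\infty$, coercivity and $V>1$ are immediate.

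Because $u=d=0$ and, by Assumption \ref{ass6}, a single jump cannot flip the sign of $q_i$, the generator acts coordinate-wise: $\mathcal{Q}V(q)=\sum_i\mathcal{Q}_iV(q)$ where, on an ask coordinate ($q_i>0$, $i\ge i_{\text{bestask}}$),
\begin{align*}
\mathcal{Q}_iV(q)
&=\sum_n f_i(q,n)\bigl[z^{q_i+n}-z^{q_i}\bigr]+\sum_n g_i(q,n)\bigl[z^{q_i-n}-z^{q_i}\bigr]\\
&=z^{q_i}\bigl(G^{f,i,q}(z)-1\bigr)\left[f_i^*(q)-g_i^*(q)\,\frac{1-G^{g,i,q}(z^{-1})}{G^{f,i,q}(z)-1}\right],
\end{align*}
and the analogous identity holds on bid coordinates with $f$ and $g$ swapped. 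For coordinates with $|q_i|>U$, Assumption \ref{ass8} makes the bracket uniformly bounded above by $-r$ for $z$ close enough to $1$, while Assumption \ref{ass8p2} provides $G^{f,i,q}(z)-1\ge B_f(z)>0$ (resp.\ $B_g(z)$ on the bid side), so $\mathcal{Q}_iV(q)\le -\gamma\, z^{|q_i|}$ with $\gamma:=r\min(B_f(z),B_g(z))>0$. For coordinates with $|q_i|\le U$, Assumption \ref{ass7} supplies a uniform bound $L'$ on $f_i^*(q)G^{f,i,q}(z)+g_i^*(q)G^{g,i,q}(z)$ in a right neighborhood of $1$, so $|\mathcal{Q}_iV(q)|\le Cz^U$ for a constant $C$ depending only on $L'$ and $z$.

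Summing over the finitely many coordinates and comparing with $V(q)=\sum_i z^{|q_i|}$ (adding and subtracting $\gamma\sum_{|q_i|\le U}z^{|q_i|}\le 2K\gamma z^U$ to force the $-\gamma V(q)$ on the right-hand side) yields the drift inequality $\mathcal{Q}V(q)\le -\gamma V(q)+B$ with $B:=2K(\gamma+C)z^U$. Theorems 4.2 and 6.1 of \citet{meyn1993stability} then give the three conclusions at once.

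The main obstacle I anticipate is bookkeeping rather than any conceptual difficulty. Three things need to be handled carefully: first, choosing a single $z\in(1,z^*)$ that simultaneously realizes the $\varlimsup$ bounds in Assumptions \ref{ass7} and \ref{ass8} with strict inequalities and still keeps all generating functions finite (Assumption \ref{ass7} first part); second, treating boundary coordinates where a jump sends $q_i$ to $0$, thereby shifting $i_{\text{bestbid}}$ or $i_{\text{bestask}}$ and altering the role of neighboring queues -- this does not affect the decomposition of $\mathcal{Q}V$ since $V$ depends only on $|q_i|$ and the rates at the pre-jump state $q$ are what enter the generator computation; third, verifying that the ``otherwise'' null rates in Assumption \ref{ass1} together with the single-coordinate-at-a-time nature of pure order book jumps truly make the decomposition $\mathcal{Q}V=\sum_i\mathcal{Q}_iV$ exact. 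Once these checks are carried out, the drift inequality above closes the argument.
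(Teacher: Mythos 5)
Your proposal is correct and follows essentially the same route as the paper: the same exponential Lyapunov function (the paper's $V(q)=\sum_i z^{|q_i|-U}$ is yours times the constant $z^{-U}$), the same coordinate-wise generator computation reducing to the bracket $f_i^*(q)-g_i^*(q)\frac{1-G^{g,i,q}(z^{-1})}{G^{f,i,q}(z)-1}$, the same use of Assumptions \ref{ass7}, \ref{ass8}, \ref{ass8p2} to choose a single $z$ near $1$ yielding the drift inequality, and the same appeal to Theorems 4.2 and 6.1 of \citet{meyn1993stability}. The only step you gloss over is the hypothesis of Theorem 6.1 that all compact sets be petite, which the paper disposes of by noting that compact subsets of the countable space $\Omega$ are finite unions of singletons and hence petite.
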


Consider now the embedded Markov chain $q(n)$ defined by $q(n)=q(J_n)$, where $J_n$ is the time of the $n$-th jump, and $q(J_n)$ the state of the LOB after this event. The study of the embedded Markov chain is an important step in order to obtain the diffusivity of the price process in our setting. We write 
$$a_i^*(q) = \frac{f_i^*(q)}{\sum_i[f_i^*(q)+g_i^*(q)]},~b_i^*(q) = \frac{g_i^*(q)}{\sum_i[f_i^*(q)+g_i^*(q)]},$$ for the proportions of queue size increases and decreases, and replace Assumption \ref{ass8} by the following one. 

\begin{assumption}
There exist $r>0$ and $U>1$ such that 
\begin{eqnarray}
\varlimsup_{z\to 1^+} \sup_{(q,i):q_i>U,i \geq i_{bestask}}[a_i^*(q) - b_i^*(q)\frac{1-G^{g,i,q}(z^{-1})}{G^{f,i,q}(z)-1}] &<& -r \nonumber \\
\varlimsup_{z\to 1^+} \sup_{(q,i):q_i<-U,i \leq i_{bestbid}}[b_i^*(q) - a_i^*(q)\frac{1-G^{f,i,q}(z^{-1})}{G^{g,i,q}(z)-1}] &<& -r. \nonumber
\end{eqnarray}\label{ass8add}
\end{assumption}

The following theorem is proved in appendix. 

\begin{theorem}
When $u = d =0$, under Assumptions \ref{ass1}, \ref{ass6}, \ref{ass7}, \ref{ass8p2} and \ref{ass8add}, the embedded discrete-time Markov chain $q(n)$ is V-uniformly ergodic and positive Harris recurrent. \label{theo2}
\end{theorem}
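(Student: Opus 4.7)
The plan is to mirror the Lyapunov-function proof of Theorem \ref{theo1} but using the discrete-time Foster--Lyapunov criterion. Let $P$ denote the transition kernel of the embedded chain and $\Lambda(q):=\sum_i\bigl(f_i^*(q)+g_i^*(q)\bigr)$ the total exit rate from state $q$, which is strictly positive by Assumption \ref{ass1}. The identity
\[
PV(q)-V(q)=\frac{\calQ V(q)}{\Lambda(q)}
\]
shows that the discrete-time drift is obtained from the continuous-time drift by normalisation: in the expanded expression, each rate $f_i^*(q)$, $g_i^*(q)$ is replaced by the corresponding probability $a_i^*(q)$, $b_i^*(q)$. This is precisely why Assumption \ref{ass8add} is the natural discrete-time counterpart of Assumption \ref{ass8}.

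I would re-use the coercive Lyapunov function from Theorem \ref{theo1}, of the form
\[
V(q)\;=\;1+\sum_{i=-K}^{K}\bigl(z^{|q_i|}-1\bigr),
\]
with a parameter $z>1$ to be tuned close to $1$. Expanding $PV(q)-V(q)$ and gathering the contributions of each queue $i$, the factor $z^{|q_i|}$ comes out: on the ask side, for $q_i>U$, $i\ge i_{bestask}$, queue $i$ contributes
\[
z^{q_i}\bigl(G^{f,i,q}(z)-1\bigr)\Bigl[a_i^*(q)-b_i^*(q)\,\frac{1-G^{g,i,q}(z^{-1})}{G^{f,i,q}(z)-1}\Bigr],
\]
with a symmetric expression for $q_i<-U$, $i\le i_{bestbid}$. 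For $z$ close enough to $1$ the bracket is $\le -r/2$ by Assumption \ref{ass8add}, while $G^{f,i,q}(z)-1\ge B_f(z)>0$ by Assumption \ref{ass8p2}, giving a strictly negative term of order $z^{|q_i|}$. For the indices $i$ with $|q_i|\le U$, the contribution is uniformly bounded using $a_i^*,b_i^*\le 1$ and the finiteness statement of Assumption \ref{ass7}. Summing in $i$ yields a geometric drift
\[
PV(q)\;\le\;(1-\gamma)\,V(q)+B\,\one_C(q)
\]
on the finite set $C=\{q\in\Omega:|q_i|\le U\text{ for all }i\}$. Since the embedded chain is irreducible and aperiodic by Assumption \ref{ass1} and $\Omega$ is countable, $C$ is petite, so Theorem 15.0.1 of \citet*{Meyn2009} delivers V-uniform ergodicity and positive Harris recurrence.

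The main technical obstacle is the same competition phenomenon as in Theorem \ref{theo1}, but sharpened by the normalisation: the Lyapunov function is additive across queues, so the negative drift produced at indices where $|q_i|>U$ must dominate, uniformly in $q$, the possibly positive contributions of the remaining indices after all of them are divided by the \emph{single} global factor $\Lambda(q)$. This is exactly what forces working with the probabilities $a_i^*,b_i^*$ rather than the rates $f_i^*,g_i^*$: a state with a large total exit rate could otherwise dilute the individual negative drifts to an unusable level. Choosing $z$ close enough to $1$ so that the bracket in Assumption \ref{ass8add} is uniformly $<-r/2$ and $B_f(z),B_g(z)>0$ settles this point, after which the argument parallels the proof of Theorem \ref{theo1}.
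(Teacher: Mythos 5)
Your proposal is correct and follows essentially the same route as the paper: the same exponential Lyapunov function, the same drift computation in which the rates $f_i^*,g_i^*$ get replaced by the jump probabilities $a_i^*,b_i^*$ (which is exactly the content of your identity $PV(q)-V(q)=\mathcal{Q}V(q)/\Lambda(q)$), and the same appeal to Assumptions \ref{ass8p2} and \ref{ass8add} followed by a discrete-time Foster--Lyapunov conclusion using that finite sets are petite. The only differences are cosmetic (an affine renormalisation of $V$ and the precise choice of the finite set $C$).
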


\subsection{General case}

We are now interested in the case where $u$ and $d$ are no longer fixed to $0$. Recall that $q(n)$ represents the state of the LOB after the $n$-th event and $p_{ref}(n)$ is the reference price (the center of the LOB) after the $n$-th event. We thus consider here the process of reference price increments $c(n)$ (since the reference price itself is of course not ergodic), defined as the reference price change at the $n$-th event: $$c(n) = p_{ref}(n) - p_{ref}(n-1),$$ and the embedded chain $Y(n) = (q(n),c(n))$, $n\in\mathbb{N}$, with $c(0)$ an artificial starting value. The process $Y(n)$ remains obviously Markovian. For some $z>1$, let 
$$V^z([q,c])=\sum_{i=-K,i\neq 0}^K{z^{|q_i|-U}}.$$ 
We make two additional assumptions for the general case.

\begin{assumption}\label{ass10}
There exist $z > 1$ and $L^{\pi}>0$ such that for $Q^{inc}$, $Q^{dec}$, $Q_K$, $Q_{-K}$ four random variables such that $Q^{inc}\sim \pi^{inc}$,  $Q^{dec}\sim \pi^{dec}$, $Q_{K}\sim \pi_{K}$ and $Q_{-K}\sim \pi_{-K}$:
$$\E[V^z([Q^{inc},c])]+\E[V^z([Q^{dec},c])]+\E[z^{|Q_{K}|- U}]+\E[z^{|Q_{-K}|- U}]\leq L^{\pi}.$$ 
\end{assumption}

\begin{assumption}
There exists a finite set $W \subset \Omega$ such that the upper bound of the proportion of reference price jumps in any order book state $q$ is smaller than one on $ \Omega /W$: 
$$\sup_{q \in \Omega /W} \frac{u(q)+d(q)}{\sum_i[f_i^*(q)+g_i^*(q)]+u(q)+d(q)}< 1.$$ \label{ass11} 
\end{assumption}

Assumption \ref{ass10} is technical and imposes some regularities on the four distributions $\pi^{inc}$, $\pi^{dec}$, $\pi_{K}$ and $\pi_{-K}$. Assumption \ref{ass11} ensures that a reference price change is not the only possible event except for a finite number of LOB states. Under these assumptions, we have the following theorem proved in appendix on the ergodicity of the embedded chain $Y(n)$.
\begin{theorem}
Under Assumptions \ref{ass1}, \ref{ass6}, \ref{ass7}, \ref{ass8p2}, \ref{ass8add}, \ref{ass10} and \ref{ass11}, the embedded discrete-time Markov chain $Y(n)=(q(n),c(n))$ is V-uniformly ergodic and positive Harris recurrent. \label{theo3}
\end{theorem}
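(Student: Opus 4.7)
The plan is to verify the standard Foster--Lyapunov geometric drift condition for the embedded chain $Y(n)=(q(n),c(n))$ using the Lyapunov function $V^z([q,c])=\sum_{i=-K,\,i\neq 0}^K z^{|q_i|-U}$, picking $z>1$ small enough so that all of Assumptions \ref{ass7}, \ref{ass8add}, \ref{ass10} hold for this specific $z$ (the monotonicity $z\mapsto z^{|q|-U}$ lets us replace the $z$ of Assumption \ref{ass10} by a smaller one at the cost of possibly enlarging $L^\pi$). Replacing $V^z$ by $V^z+1$ if needed ensures $V^z\geq 1$, and coercivity of $V^z$ on $\Omega\times\{-\alpha,0,\alpha\}$ is immediate. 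Irreducibility and aperiodicity of $Y(n)$ are inherited from Assumption \ref{ass1}, since $c(n)$ is read off deterministically from the type of the $n$-th jump of $X$.

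Write $P$ for the transition kernel of $Y$, $\lambda(q)=\sum_i[f_i^*(q)+g_i^*(q)]+u(q)+d(q)$ for the total rate, and $\eta(q)=(\sum_i[f_i^*(q)+g_i^*(q)])/\lambda(q)$ for the probability that the next embedded transition is a pure order book state jump. The one-step drift splits as
\[
(PV^z)(q,c)-V^z(q,c)=\eta(q)\,\Delta^{\mathrm{pure}}(q)+(1-\eta(q))\,\Delta^{\mathrm{ref}}(q),
\]
with $\Delta^{\mathrm{pure}}$, $\Delta^{\mathrm{ref}}$ the expected $V^z$-changes conditional on the next event being, respectively, a pure order book state jump or a reference-price change. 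The formula for $\Delta^{\mathrm{pure}}(q)$ is exactly the one-step drift analysed in the proof of Theorem \ref{theo2}, because the pure order book state jumps, renormalised, carry weights $a_i^*(q)\,l_i(q,n)$ and $b_i^*(q)\,k_i(q,n)$. Assumptions \ref{ass6}, \ref{ass7}, \ref{ass8p2}, \ref{ass8add} then furnish constants $r_0>0$, $B_0<\infty$ and a finite set $W_1\subset\Omega$ with
\[
\Delta^{\mathrm{pure}}(q)\leq -r_0\,V^z(q)+B_0\,\one_{q\in W_1}.
\]
For $\Delta^{\mathrm{ref}}(q)$, I expand along the two directions of reference-price jumps, and within each direction along the $(1-\theta^{reinit})$--$\theta^{reinit}$ split between shift and reinit. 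A shift replaces $V^z(q)$ by $V^z(q)-z^{|q_{\pm K}|-U}+z^{|l|-U}$ with $l\sim\pi_{\pm K}$; a reinit replaces it by $V^z(Q^{inc})$ or $V^z(Q^{dec})$. Assumption \ref{ass10} bounds every one of these expected increments by $L^\pi$ (the reinit increment is in fact $\leq L^\pi-V^z(q)$), so $\Delta^{\mathrm{ref}}(q)\leq L^\pi$ uniformly in $q$.

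By Assumption \ref{ass11} there is a finite set $W_2\subset\Omega$ and a constant $\eta_0>0$ with $\eta(q)\geq\eta_0$ for $q\in\Omega\setminus W_2$. Setting $W:=W_1\cup W_2$, for every $q\notin W$,
\[
(PV^z)(q,c)-V^z(q,c)\leq -\eta_0 r_0\,V^z(q)+L^\pi.
\]
On the finite set $W\times\{-\alpha,0,\alpha\}$ the left-hand side is bounded since $V^z$ is; absorbing that bound into an additive constant $B$ yields a geometric drift inequality of the form $PV^z\leq (1-\gamma)V^z+B\,\one_{C}$ on the finite (hence petite, by irreducibility and aperiodicity) set $C=W\times\{-\alpha,0,\alpha\}$. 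The discrete-time $V$-uniform ergodicity criterion of \cite{meyn1993stability} (cf. Chapter~15 of \cite{Meyn2009}) then yields both $V^z$-uniform ergodicity and positive Harris recurrence of $Y(n)$.

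The main obstacle is controlling the interplay between the pure order book state jumps, which provide the stabilising drift of Theorem \ref{theo2}, and the reference-price jumps, which can both dilute that stabilising drift and, in the reinit case, wipe out an arbitrarily large configuration in a single step. Assumptions \ref{ass10} and \ref{ass11} are crafted precisely for this: the first makes the tail distributions $\pi_{\pm K}$, $\pi^{inc}$, $\pi^{dec}$ live on the same exponential scale as $V^z$, so each reference-price step produces only an $O(1)$ change in expected $V^z$; the second ensures that outside a finite exceptional set the pure order book state jumps retain a uniformly positive share $\eta_0$ of the embedded transitions, so the contraction coming from Theorem \ref{theo2} survives multiplication by $\eta(q)$.
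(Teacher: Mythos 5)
Your proposal is correct and follows essentially the same route as the paper: decompose the one-step drift of $V^z$ into the pure order book part (weighted by the conditional probability $n^*(q)$ of a non-price event and controlled by the drift bound from Theorem \ref{theo2}) and the reference-price part (bounded uniformly via Assumption \ref{ass10}), then use Assumption \ref{ass11} to keep the contraction alive outside a finite set and conclude with the standard geometric drift criterion of Meyn and Tweedie. The only differences are cosmetic (your indicator $\one_{q\in W_1}$ versus the paper's additive constant, and your slightly more careful handling of the common $z$).
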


\section{Scaling limits}\label{sec::scalinglimit}

We are now interested in the scaling limit of the reference price process. Let $J_i$ be the time of the $i$-th jump of the process.
Let $$N(t) = \inf\{n,~J_n \leq t\}$$ be the number of events until time $t$, with the convention $\inf\{\emptyset\}=0$. Let $Z(n)$ be the cumulative price change until the $n$-th event, that is $Z(0)=0$ and for $n\geq 1$: $$Z(n) = \sum_{i=1}^n c(i).$$ We have $$Z(N(t)) = p_{ref}(t) - p_{ref}(0).$$ Thus it represents the reference price at time $t$ recentered its starting value. We show in this section the diffusive behavior of $Z(N(t))$ as $n$ tends to infinity.\\

\noindent Consider again the embedded chain $Y(n) = (q(n),c(n))$. From Theorem \ref{theo3}, $Y(n)$ is V-uniformly ergodic towards an invariant distribution $\pi^*$. We have the following theorem for the rescaled price process in event time $\hat{S}^{(n)}(t):=\frac{Z({\lfloor nt \rfloor})}{\sqrt{n}}$.

\begin{theorem}
Under Assumptions \ref{ass1}, \ref{ass6}, \ref{ass7}, \ref{ass8p2}, \ref{ass8add}, \ref{ass10} and \ref{ass11}, if $\E_{\pi^*}[c(0)]=0$, then the series
\begin{equation}\label{defsigma}
\sigma^2 = \E_{\pi^*}[c_0^2] + 2\sum_{n=1}^\infty \E_{\pi^*}[c_0c_n],
\end{equation}
converges absolutely. Furthermore, if $Y(0) \sim \pi^*$, we have the following convergence in law in $D[0,\infty)$:  $$\hat{S}^{(n)}(t) \overset{n \to \infty}{\to} \sigma B(t),$$ where $B(t)$ is a standard Brownian motion. \label{theo41}
\end{theorem}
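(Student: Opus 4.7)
The plan is to deduce the theorem from the V-uniform ergodicity of the embedded chain $Y(n)=(q(n),c(n))$ established in Theorem \ref{theo3}, combined with a functional central limit theorem for stationary strongly mixing sequences applied to the bounded, centered functional $c_n$ of $Y(n)$. The crucial elementary observation is that by Equation \eqref{eqa2} the reference price jumps are always of size at most one tick, so $c(n)\in\{-\alpha,0,\alpha\}$ is uniformly bounded; this makes every moment condition that typically appears in such FCLTs automatically satisfied.

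First I would translate V-uniform ergodicity into a mixing statement for the stationary chain. By Theorem \ref{theo3} there exist $r\in(0,1)$ and $R<\infty$ with $\|P^n(y,\cdot)-\pi^*\|_V\le R r^n V(y)$, and a classical result of Meyn and Tweedie (see Chapter 16 of \citet*{Meyn2009}) states that such geometric ergodicity is equivalent to geometric $\beta$-mixing (absolute regularity) of the stationary chain started from $\pi^*$, i.e.\ $\beta(n)\le C r^n$ with $C=R\,\E_{\pi^*}[V]<\infty$. In particular the stationary sequence $(Y(n))_{n\ge 0}$, and hence $(c_n)_{n\ge 0}$, is $\alpha$-mixing with coefficients $\alpha(n)\le Cr^n$.

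Second, I would check absolute convergence of the series \eqref{defsigma}. Under $\pi^*$ the centering assumption $\E_{\pi^*}[c_0]=0$ holds and $|c_n|\le\alpha$, so the covariance bound for bounded $\alpha$-mixing sequences (Davydov's inequality, or simply $|\E_{\pi^*}[c_0 c_n]|\le 4\alpha^2 \alpha(n)$) gives $|\E_{\pi^*}[c_0c_n]|\le C' r^n$. The series in \eqref{defsigma} therefore converges absolutely, and $\sigma^2\ge 0$ coincides with the usual asymptotic variance $\lim_{n\to\infty}n^{-1}\mathrm{Var}_{\pi^*}(Z(n))$.

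Finally, I would invoke a functional invariance principle for stationary $\alpha$-mixing sequences (for instance Herrndorf's theorem, or the Markov-chain version in \citet*{Meyn2009}): since $(c_n)$ is stationary, centered, bounded and geometrically $\alpha$-mixing, the partial sum process $n^{-1/2} Z(\lfloor nt\rfloor)$ converges weakly in $D[0,\infty)$ to $\sigma B(t)$. Tightness follows from the stationary moment bound on the increments combined with the mixing estimate, and finite-dimensional convergence reduces to the one-dimensional CLT for each linear combination by the Cramér--Wold device. The main conceptual obstacle is the passage from V-uniform ergodicity to geometric $\beta$-mixing; this is standard but relies crucially on $V$ being integrable with respect to $\pi^*$, which is built into the V-uniform norm and thus ensured by Theorem \ref{theo3}. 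All other steps are routine given the boundedness of $c_n$.
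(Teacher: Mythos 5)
Your proof is correct and ends up in the same place as the paper's, but it is considerably more explicit about the one hypothesis that actually requires work. The paper disposes of Theorem \ref{theo41} in three lines: it cites Theorem 19.1 of \citet*{billingsley2009convergence} and simply asserts that $(c_n)$ is stationary, ergodic (``since it is stationary and mixing''), centered and square-integrable; no quantitative mixing or weak-dependence rate is ever verified, and the absolute convergence of the series \eqref{defsigma} is obtained only through the citation. Your route --- V-uniform ergodicity of $Y(n)$ from Theorem \ref{theo3} implies geometric $\beta$-mixing (hence $\alpha$-mixing) of the stationary chain, then Davydov's covariance inequality for the bounded sequence $c_n\in\{-\alpha,0,\alpha\}$ gives $|\E_{\pi^*}[c_0c_n]|\le C' r^n$ and hence absolute convergence of \eqref{defsigma}, then Herrndorf's invariance principle --- supplies exactly the estimate the paper leaves implicit. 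The step you single out as the conceptual obstacle is the right one to worry about, and it does go through: the drift condition established in the proof of Theorem \ref{theo3} yields $\pi^*(V)<\infty$, so the $\beta$-mixing coefficients of the stationary chain are finite and decay geometrically. Two minor points to tidy up: avoid the notational clash between the mixing coefficient $\alpha(n)$ and the tick size $\alpha$; and either check that the FCLT you invoke covers the degenerate case $\sigma=0$ or treat that case separately via a maximal inequality. In short, this is the same strategy as the paper's (reduce to an off-the-shelf FCLT for stationary weakly dependent sequences), but with the mixing-rate verification actually carried out, which makes your version the more complete of the two.
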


\begin{proof}
This theorem is a direct application of Theorem 19.1 in \citet*{billingsley2009convergence}. Indeed, the sequence $c_n$ is clearly stationary and ergodic in the sense of \citet*{billingsley2009convergence} (for example since it is stationary and mixing). Moreover, it has a finite second order moment and for all $n$, $\E_{\pi^*}[c_n] = \E_{\pi^*}[c_0] = 0$. 
\end{proof}


Theorem \ref{theo41} shows that in event time, the large scale limit of the reference price process is a Brownian motion. However, the most relevant question is that of the large scale limit of the reference price in calendar time. Thus we now consider the process $$\tilde{S}^{(n)}(t) = \frac{Z\big(N(nt)\big)}{\sqrt{n}}.$$ 

To prove the diffusivity of $\tilde{S}^{(n)}(t)$, we need a last assumption which is a bound of the expected value on the waiting time between two events.

\begin{assumption}
There exists some $m > 0$, such that
$$\inf_{q \in \Omega} \big\{\sum_i (f_i^*(q)+g_i^*(q)) + u(q)+d(q)\big\} > m.$$ \label{ass12}
\end{assumption}

Let $\tau_n$ be the inter-arrival time between the $n$-th and the $(n-1)$-th jumps of the Markov process $X$.
We then have the following theorem proved in appendix for the long term behavior of the reference price in calendar time. 

\begin{theorem} \label{theo_scaling}
Under Assumptions \ref{ass1}, \ref{ass6}, \ref{ass7}, \ref{ass8p2}, \ref{ass8add}, \ref{ass10}, \ref{ass11} and \ref{ass12}, the process $(q(n),c(n),\tau(n))$ is positive Harris recurrent. Furthermore, if $\E_{\pi^*}[c(0)]=0$ and $Y(0)\sim \pi^*$, then $$\tilde{S}^{(n)}(t) \overset{n \to \infty}{\to} \frac{\sigma}{\sqrt{\E_{\pi^{**}}[\tau(1)]}} B(t),$$ with $\pi^{**}$ is the invariant distribution of $(q(n),c(n),\tau(n))$ and $\sigma$ defined in \eqref{defsigma}.
\end{theorem}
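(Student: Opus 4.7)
My plan is to prove the scaling limit in three steps: first establish positive Harris recurrence of the augmented chain $(q(n), c(n), \tau(n))$; then derive a functional law of large numbers for the counting process $N$; and finally apply a random time change to the event-time CLT of Theorem \ref{theo41}.

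For the first step, observe that conditionally on $q(n-1)$, the inter-arrival time $\tau(n)$ is exponentially distributed with rate $\lambda(q(n-1)) := \sum_i (f_i^*(q(n-1)) + g_i^*(q(n-1))) + u(q(n-1)) + d(q(n-1))$, independent of the jump producing $(q(n), c(n))$. Hence $(q(n), c(n), \tau(n))$ is Markov, with transition depending only on the $q$-coordinate. Assumption \ref{ass12} gives $\lambda(q) > m$ uniformly in $q$, so $\tau$ has conditional moments uniformly bounded by those of an $\mathrm{Exp}(m)$ variable. Augmenting the Lyapunov function used for Theorem \ref{theo3} by a bounded function of $\tau$ preserves the drift condition, which yields positive Harris recurrence for the triple; denote its invariant law by $\pi^{**}$, with $(q,c)$-marginal $\pi^*$ and conditional law of $\tau$ given $q$ equal to $\mathrm{Exp}(\lambda(q))$. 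By the Markov chain ergodic theorem,
$$\frac{J_n}{n} = \frac{1}{n}\sum_{i=1}^n \tau(i) \xrightarrow{\text{a.s.}} \mu := \mathbb{E}_{\pi^{**}}[\tau(1)] \in (0, 1/m],$$
and inverting this monotone limit gives $N(n\cdot)/n \to \cdot/\mu$ almost surely, uniformly on compact sets, hence in the Skorokhod topology on $D[0,\infty)$.

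To conclude, observe that because $N(nt)$ is integer-valued and $\hat{S}^{(n)}(s) = Z(\lfloor ns \rfloor)/\sqrt{n}$, we have the identity $\tilde{S}^{(n)}(t) = \hat{S}^{(n)}\bigl(N(nt)/n\bigr)$. Theorem \ref{theo41} gives $\hat{S}^{(n)} \Rightarrow \sigma B$ in $D[0,\infty)$; joint convergence with the deterministic continuous limit $N(n\cdot)/n \to \cdot/\mu$ is automatic, and the random time change theorem for weak convergence in $D[0,\infty)$ (e.g.\ Theorem 14.4 of Billingsley) yields $\tilde{S}^{(n)} \Rightarrow \sigma B(\cdot/\mu)$. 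Brownian scaling $B(t/\mu) \stackrel{d}{=} \mu^{-1/2} B(t)$ completes the proof. The main technical obstacle is the first step: extending the Foster-Lyapunov drift from $(q,c)$ to the triple $(q,c,\tau)$ must be handled with care, and crucially relies on the uniform lower bound on $\lambda$ from Assumption \ref{ass12}, which both prevents explosion of the augmented chain and guarantees $\mu < \infty$ so that the limiting volatility is well-defined and non-degenerate.
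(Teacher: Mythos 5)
Your proposal is correct and follows essentially the same route as the paper: positive Harris recurrence of the augmented chain via a Lyapunov function extended to the $\tau$-coordinate (the paper takes $V^*([q,c,t])=V([q,c])+t$, with the extra term controlled through Assumption \ref{ass12}), the law of large numbers for $\frac{1}{n}\sum_{i\le n}\tau(i)$, and a random time change applied to Theorem \ref{theo41} (the paper phrases this last step as a decomposition of $\tilde{S}^{(n)}$ plus the Skorohod representation theorem and continuity in the Skorohod topology rather than invoking Billingsley's time-change theorem, but it is the same argument). The one step you gloss over is what the paper isolates as Lemma \ref{lem1}: since the $\tau$-coordinate lives in $\mathbb{R}^+$, the sublevel sets of the augmented Lyapunov function are no longer finite and you must verify that sets of the form $\{\text{finite set}\}\times\mathbb{R}^+$ are petite --- which follows exactly from your observation that the one-step transition kernel depends only on the $q$-coordinate --- before the drift condition can be combined with Theorems 15.0.1 and 9.1.8 of Meyn and Tweedie to yield positive Harris recurrence.
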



Theorem \ref{theo_scaling} discusses the scaling limit of the underlying reference price. However, the difference between this price and the more usual $p_{bestbid}(t)$, $p_{bestask}(t)$ or $p_{mid}(t)$ being bounded by $2K$, the same result applies replacing the reference price by any of those prices. 

\section{Some specific models}\label{specific}

The Markovian setting proposed in this work allows us for a wide range of possibilities for modeling order book dynamics. The goal of this section is to give some natural and tractable examples of models, essentially already introduced in the literature, which can be seen as particular cases of our general framework. Together with the dynamics of the models, we provide sufficient conditions so that the assumptions made in the previous sections are satisfied in these specific models. Thus the ergodicity and diffusive scaling properties apply in all these models.

\subsection{Best bid/best ask Poisson model (\citet*{Cont2010})}

The basic idea of this first model, inspired by that introduced in \citet*{Cont2010}, is to use a constant spread size (fixed to 1 tick) and to consider only two limits in the order book. 

\begin{example}
Poisson model with $K = 1$.
\begin{enumerate}
\item[$\bullet$] We take $K=1$, $\theta^{reinit} = 1$ and assume that the functions $f_i$, $g_i$, $u$ and $d$ have the following forms, with $0<\lambda < \mu<\infty$: 
\begin{eqnarray}
f_1(q,n) &=& \lambda\one_{n= 1} \nonumber \\
g_1(q,n) &=& \mu\one_{q_1 > 0}\one_{n= 1}\nonumber \\
f_{-1}(q,n) &=& \mu\one_{q_{-1} < 0}\one_{n= 1}\nonumber \\
g_{-1}(q,n) &=& \lambda\one_{n= 1} \nonumber \\
u(q) &=& \theta \one_{q_1 = 0} \nonumber \\
d(q) &=& \theta \one_{q_{-1} = 0}. \nonumber
\end{eqnarray}
\item[$\bullet$] $\pi^{inc}$ and $\pi^{dec}$ satisfy Assumption \ref{ass10} and\\

for any $q_{-1} > 0$, $q_1 \in \mathbb{Z}$, 
$$\pi^{inc}(q_{-1},q_1)= \pi^{dec}(q_{-1},q_1) = 0,$$
for any $q_1 < 0$, $q_{-1} \in \mathbb{Z}$, 
$$\pi^{inc}(q_{-1},q_1) = \pi^{dec}(q_{-1},q_1) = 0. $$ 
\end{enumerate}
\end{example}

Note that here, the boundary distributions $\pi_{K}$ and $\pi_{-K}$ are no longer needed, since the order book reinitialization probability $\theta^{reinit}$ is set to one.\\

In this model, the role of $p_{ref}$ is very close to that of $p_{mid}$, which splits the order book into two parts: the bid side ($Q_{-1}$) and the ask side ($Q_1$). The limit order insertion, cancellation and market order insertion processes are assumed to be independent Poisson processes. The size of these orders is assumed to be constant and $p_{ref}$ jumps with rate $0$ when none of the queues $Q_{\pm 1}$ is empty, with rate $\theta$ to the right side when $Q_{1}$ is empty, with rate $\theta$ to the left side when $Q_{-1}$ is empty. When the value of $\theta$ is very large, the price jump is almost instantaneous as soon as one of the two queues becomes empty. In that case, this model becomes very close to that proposed in \citet*{Cont2010}, where an infinite rate is used (note that the convergence of the rescaled price process can still be proved with some minor modifications in such case of infinite jump rate). 

\subsection{Poisson model with $K > 1$}\label{sec:PoissonK2}

It is natural to try to extend the previous Poisson model in order to include more queues in the order book and to allow for a spread size different from one tick. In such model, the role of $p_{ref}$ is slightly different since it is not necessarily the mid price. Again, $p_{ref}$ can be understood here as the underlying efficient price that determines the order arrival intensities at different price levels. Now buy/sell limit orders can be inserted both on the right side and on the left side of  $p_{ref}$.

\begin{example}
Poisson model with $K > 1$. 
\begin{enumerate}
\item[$\bullet$] The functions $f_i$, $g_i$, $u$ and $d$ have the following forms, for $i = -K, ..., K$: 
\begin{eqnarray}
f_i(q,1) &=& \lambda_i \one_{i > i_{bestbid}(q)} + \gamma_i\one_{i = i_{bestbid}(q)} \nonumber \\
&& + \mu_i \one_{i \leq i_{bestbid}(q)}\one_{q_i<0} \nonumber \\
g_i(q,1) &=& \lambda_{-i} \one_{i < i_{bestask}(q)} + \gamma_{-i}\one_{i = i_{bestask}(q)} \nonumber \\
&& + \mu_{-i}  \one_{i \geq i_{bestask}(q)}\one_{q_i>0} \nonumber \\
f_i(q,n) &=& 0, \mbox{for n} > 1 \nonumber \\
g_i(q,n) &=& 0, \mbox{for n} > 1 \nonumber \\
u(q) &=& \theta_{i_{bestask}(q)} \nonumber \\
d(q) &=& \theta_{-i_{bestbid}(q)}. \nonumber
\end{eqnarray}
\item[$\bullet$] $\pi^K$, $\pi^{-K}$, $\pi^{inc}$ and $\pi^{dec}$ satisfy Assumption \ref{ass10}. 
\item[$\bullet$] For any $i$,$j \in \{-K,...,K \}$, $i< j$, we have 
\begin{eqnarray}
\mu_{-i} > \lambda_i>0 \nonumber \\
0\leq\theta_{i}\leq \theta_{j}. \nonumber
\end{eqnarray} 
\end{enumerate}
\end{example}

Limit orders, cancellations and market orders (which consume the quantities at the best offer limits) are modeled by independent Poisson processes with different intensities $\lambda_i^{buy/sell}$, $\mu_i^{buy/sell}$ and $\gamma_i^{buy/sell}$, depending on the distance from their target price to $p_{ref}$. We assume bid-ask symmetry in this model, that is $\lambda_i^{buy} = \lambda_{-i}^{sell}$, $\mu_i^{buy} = \mu_{-i}^{sell}$ and $\gamma_i^{buy} = \gamma_i^{sell}$, thus we omit the index buy/sell in the above equations. Remark that the intensity of the buy/sell market order flow at the best limit $\gamma_i$ is a function of $i$, that is the position of the best limit with respect to the reference price. This allows us to model the fact that market participants have different behaviors towards the best limit, depending on their evaluation of the reference price.\\

\noindent The reference price jump dynamics is modeled by a sequence of increasing rates $\theta_i,i \in \{-K,...,K\}$. This means that the larger the index of the best ask queue, the larger the probability of $p_{ref}$ to increase and the smaller the index of the best bid queue, the larger the probability of $p_{ref}$ to decrease. Note that in this model, we no longer assume any specific value for the reinitialization probability $\theta^{reinit}$ and use Assumption \ref{ass10} to impose some properties on the boundary distributions $\pi^K$ and $\pi^{-K}$ and the initialization distributions $\pi^{inc}$ and $\pi^{dec}$. 

\subsection{Zero-intelligence model}

We now present a different way of extending the Poisson model with $K=1$ in order to include more queues in the order book. This modeling approach where two reference prices are used is called Zero-intelligence model and is introduced in \citet*{smith2003statistical}. It is also considered in \citet*{abergel2011mathematical} and is the basis of \citet*{cont2010stochastic}. We define $\phi(i,j)$ as the absolute distance (in number of ticks) between the queue $Q_i$ and $Q_j$ ($\phi(i,j) = |i-j|$) and make the three following assumptions.  

\begin{example}
Zero-intelligence model.
\begin{enumerate}
\item[$\bullet$] The functions $f_i$, $g_i$, $u$ and $d$ have the following forms, for $i = -K, ..., K$: 
\begin{eqnarray}
f_i(q,1) &=& \lambda_{\phi(i,i_{bestbid}(q))} \one_{i > i_{bestbid}(q)} + \gamma \one_{i = i_{bestbid}(q)} + |q_i| \mu_{\phi(i,i_{bestask}(q))} \one_{i \leq i_{bestbid}(q)} \nonumber \\
g_i(q,1) &=& \lambda_{\phi(i,i_{bestask}(q))}\one_{i < i_{bestask}(q)} + \gamma \one_{i = i_{bestask}(q)} + |q_i| \mu_{\phi(i,i_{bestbid}(q))} \one_{i \geq i_{bestask}(q)} \nonumber \\
f_i(q,n) &=& 0, \mbox{for n} > 1 \nonumber \\
g_i(q,n) &=& 0, \mbox{for n} > 1 \nonumber \\
u(q) &=& \theta_{i_{bestask}(q)} \nonumber \\
d(q) &=& \theta_{-i_{bestbid}(q)}. \nonumber
\end{eqnarray}
\item[$\bullet$] $\pi^K$, $\pi^{-K}$, $\pi^{inc}$ and $\pi^{dec}$ satisfy Assumption \ref{ass10}. 
\item[$\bullet$] For any $i$,$j \in \{-K,...,K \}$, $\lambda_{\phi(i,j)}>0$ and $\mu_{\phi(i,j)}>0$.
\item[$\bullet$] For any $i$,$j \in \{-K,...,K \}$, $i< j$, we have 
\begin{eqnarray}
0\leq\theta_{i}\leq \theta_{j}. \nonumber
\end{eqnarray} 
\end{enumerate}
\end{example}

In this model, $p_{ref}$ is no longer an underlying efficient price determining the order arrival intensities. These intensities now depend on the positions of two different prices: $p_{bestbid}$ and $p_{bestask}$. Limit orders, cancellations and market orders are still described by independent Poisson processes. Buy/sell limit orders are inserted in the queues to the left/right side of the best ask/best bid price, with intensities depending on the distance between their price level and the best ask/best bid price ($\lambda_{\phi(i,i_{bestask/bestbid})}$); cancellations of buy/sell orders are sent to the queues on the left/right side of the best ask/best bid price, with intensities being linear functions of the queue sizes ($|q_i| \mu_{\phi(i,i_{bestask/bestbid})}$); market buy/sell orders are sent to the best ask/best bid queue, with intensity $\gamma$. The reference price $p_{ref}$ now provides the center of the $2K$ dimensional moving frame representing the LOB's state and the same modeling approach as in Section \ref{sec:PoissonK2} is used for its dynamics.

\subsection{Queue-reactive model (\citet*{huang2013simulating})}\label{sec::queuereactive}

In \citet*{huang2013simulating}, the Queue-reactive model for order books is introduced. This model takes into account the influence of the order book's state in determining the order arrival intensities (in a much more general way than considering only the position of the best bid and best ask queues). The Queue-reactive model assumes that no buy/sell limit order can be inserted on the right/left side of $p_{ref}$ and uses the following assumption instead of Equation \eqref{eqa2} for the dynamics of the jumps of $p_{ref}$.

\begin{assumption}
Whenever $p_{mid}$ increases (resp. decreases), $p_{ref}$ increases (resp. decreases) by $\alpha$ with probability $\theta$, provided that $q_{1}=0$ (resp. $q_{-1}=0$) at that moment. Therefore, changes in the value of $p_{ref}$ are possibly triggered by one of the three following events:
\begin{itemize}
\item The insertion of a buy (resp. sell) limit order within the bid-ask spread while $Q_{1}$ (resp. $Q_{-1}$ is empty). 
\item The cancellation of the last limit order at one of the best offer queues.
\item A market order which consumes the last limit order at one of the best offer queues.
\end{itemize}\label{assoriginqueuereactive}
\end{assumption}

With some minor modifications in the proof of Theorem \ref{theo_scaling}, one can prove that the scaling limit of $p_{ref}$ in the Queue-reactive model is a Brownian motion. As explained above, in this model, changes of $p_{ref}$ are triggered by events that modify the mid price. Here we propose a slightly modified version of the Queue-reactive model\footnote{Model I in \citet*{huang2013simulating} is used to describe the queue dynamics during constant reference price periods.} by considering the following four assumptions (note that the state space $\Omega$ is reduced in that case to $\Omega^*:= \{q \in \Omega, q_i\one_{i<0} \leq 0, q_i\one_{i>0} \geq 0\}$). 

\begin{example}
Queue-reactive type model.
\begin{enumerate}
\item[$\bullet$] The functions $f_i$, $g_i$, $u$ and $d$ have the following forms, for $i = -K, ..., K$: 
\begin{eqnarray}
f_i(q,1) &=& \lambda_{|i|}(q_i)\one_{i > 0} + \mu_{|i|}(-q_i) \one_{i < 0} \nonumber \\
g_i(q,1) &=& \lambda_{|i|}(-q_i)\one_{i < 0} + \mu_{|i|}(q_i) \one_{i > 0} \nonumber \\
f_i(q,n) &=& 0, \mbox{for n} > 1 \nonumber \\
g_i(q,n) &=& 0, \mbox{for n} > 1 \nonumber \\
u(q) &=& \theta \one_{q_1 = 0} \nonumber \\
d(q) &=& \theta \one_{q_{-1} = 0}, \nonumber
\end{eqnarray} 
with $\lambda_{|i|}$ and $\mu_{|i|}$ non-negative functions defined on $\mathbb{N}$, with $\mu_{|i|}(0) = 0$.
\item[$\bullet$]  We have 
\begin{eqnarray}
\sup_{i \in \{1,...,K\},q_i \in \mathbb{N}} (\lambda_i(q_i)) < L < \infty. \nonumber
\end{eqnarray} 
\item[$\bullet$] There exist $r>0$ and $U>1$ such that for any $q_i > U$ and any $i \in \{1,...,K\}$: 
\begin{eqnarray}
\lambda_{i}(q_i)- \mu_{i}(q_i)&<& -r. \nonumber
\end{eqnarray}
\item[$\bullet$] There exists $m'>0$ such that for any $i \in \{1,...,N\}$:
\begin{eqnarray}
\inf_{q_i \in \mathbb{N}}[\lambda_{i}(q_i)+\mu_{i}(q_i)] &>& m'. \nonumber
\end{eqnarray}
\end{enumerate}
\end{example}

Compared with Assumption \ref{assoriginqueuereactive}, changes in $p_{ref}$ are now driven by the relative position of the mid price in the current order book state. Nevertheless, we can see that the two approaches are quite similar. In this model, $p_{ref}$ always stays between $p_{bestask}$ and $p_{bestbid}$ (since $\mu_{|i|}(0) = 0$ implies that the queue sizes on the left/right side of $p_{ref}$ never become positive/negative). Such model gives us a much larger choice on the intensity functions $\lambda_{|i|}$ and $\mu_{|i|}$ than when assuming Poisson flows. Furthermore, with enough data points, these functions can be estimated in a non-parametric way, as done in \citet*{huang2013simulating}. Finally the state-dependent approach provides us very interesting insights about the way the order book state influences market participants decisions and the mechanism making the empirical distribution of the order book arise from these decisions, see \citet*{huang2013simulating}.

\section{Conclusion}

In this work, we extend the order book modeling approach proposed in \citet*{huang2013simulating} to a more general Markovian framework, allowing to take into account most relevant features of LOB dynamics such as: 
\begin{enumerate}
\item[$\bullet$] Dependencies between the order arrival processes and the LOB state.
\item[$\bullet$] Endogenous movements of the underlying efficient price and influence of the LOB state on its dynamics. 
\item[$\bullet$] Exogenous movements of the underlying efficient price.
\item[$\bullet$] Randomness in the size of the orders.
\end{enumerate}

The ergodicity of the LOB system and the diffusive limit of the rescaled price process are established under general assumptions. Finally, to illustrate the usefulness and the relevance of our approach, several examples of classical models which can be seen as particular cases of our general framework are presented.\\

\noindent To get a fully satisfying model, a last step would probably be to allow for a non-Markovian component in the market order flow (since the Markov assumption is probably quite reasonable for the limit order and cancellation flows). This can for example be done using self-exciting processes, as in \citet*{abergel2015}. However, except for very specific cases (exponential Hawkes processes for example), adding such non-Markovian component would certainly require revising completely the mathematical approach to the model. 

\section*{Ackowledgements}
We thank Fr\'ed\'eric Abergel for helpful discussions.

\section{Appendix}

\subsection{Proof of Theorem \ref{theo1}}

Let us denote by $\tilde{\mathcal{Q}}$ the infinitesimal generator matrix of $q(t)$ when $u = d=0$. The infinite matrix $\tilde{\mathcal{Q}}$ has the following form: 
\begin{eqnarray}
\tilde{\mathcal{Q}}_{q,q+ne_i} &=& f_i(q,n) \nonumber\\
\tilde{\mathcal{Q}}_{q,q -ne_i} &=& g_i(q,n) \nonumber\\
\tilde{\mathcal{Q}}_{q,q} &=& -\sum_{q \in \Omega, q' \neq q} \tilde{\mathcal{Q}}_{q,q'} \nonumber\\
\tilde{\mathcal{Q}}_{q,q'} &=& 0, \mbox{otherwise.} \nonumber
\end{eqnarray}
For some $1<z\leq z^*$ (recall that $z^*$ is defined in Assumption \ref{zstar}), let us consider the function $$V(q)=\sum_{i=-K,i\neq 0}^K{z^{|q_i|-U}}.$$ Since $q_i \geq 0$ for $i \geq i_{bestask}$, $q_i \leq 0$ for $i \leq i_{bestbid}$ and $q_i = 0$ for $i \in (i_{bestbid},i_{bestask})$, we have 
\begin{eqnarray}
\tilde{\mathcal{Q}}V(q) &:=& \sum_{q' \neq q} \tilde{\mathcal{Q}}_{qq'}[V(q')-V(q)]\nonumber \\
&=& \sum_{i \leq i_{bestbid}} \sum_{n=1}^\infty [(z^{-q_i+n-U} - z^{-q_i-U})g_i(q,n) + (z^{|q_i+n|-U} - z^{-q_i-U})f_i(q,n)] \nonumber \\
&& +  \sum_{i \geq i_{bestask}} \sum_{n=1}^\infty [(z^{q_i+n-U} - z^{q_i-U})f_i(q,n) + (z^{|q_i-n|-U} - z^{q_i-U})g_i(q,n)] \nonumber \\
&& + \sum_{i \in (i_{bestbid}, i_{bestask})} \sum_{n=1}^\infty [(z^{n-U} - z^{-U})f_i(q,n) + (z^{n-U}- z^{-U})g_i(q,n)]. \nonumber
\end{eqnarray}

Then by Assumption \ref{ass6},
\begin{eqnarray}
\tilde{\mathcal{Q}}V(q) &=& \sum_{i \leq i_{bestbid}} \sum_{n=1}^\infty [(z^{-q_i+n-U} - z^{-q_i-U})g_i(q,n) + (z^{-q_i-n-U} - z^{-q_i-U})f_i(q,n)] \nonumber \\
&& +  \sum_{i \geq i_{bestask}} \sum_{n=1}^\infty [(z^{q_i+n-U} - z^{q_i-U})f_i(q,n) + (z^{q_i-n-U} - z^{q_i-U})g_i(q,n)] \nonumber \\
&& + \sum_{i \in (i_{bestbid}, i_{bestask})} \sum_{n=1}^\infty [(z^{n-U} - z^{-U})f_i(q,n) + (z^{n-U}- z^{-U})g_i(q,n)] \nonumber \\
&=& \sum_{i \leq i_{bestbid}}  z^{-q_i-U} \sum_{n=1}^\infty (z^n  - 1)[g_i(q,n) - \frac{f_i(q,n)}{z^n}] \nonumber \\
&& + \sum_{i \geq i_{bestask}}  z^{q_i-U} \sum_{n=1}^\infty (z^n  - 1)[f_i(q,n) - \frac{g_i(q,n)}{z^n}] \nonumber \\
&& + \sum_{i \in (i_{bestbid}, i_{bestask})}  z^{-U} \sum_{n=1}^\infty (z^n  - 1)[f_i(q,n) + g_i(q,n)]. \nonumber
\end{eqnarray}

Using the definition of $G^{f,i,q}(z)$, $G^{g,i,q}(z)$, $f_i^*(q)$ and $g_i^*(q)$, we get
\begin{eqnarray}
\tilde{\mathcal{Q}}V(q) &=& \sum_{i \leq i_{bestbid}}  z^{-q_i-U} [g_i^*(q) (G^{g,i,q}(z)-1) - f_i^*(q) (1-G^{f,i,q}(z^{-1}))] \nonumber \\
&& + \sum_{i \geq i_{bestask}}  z^{q_i-U} [f_i^*(q) (G^{f,i,q}(z)-1) - g_i^*(q) (1-G^{g,i,q}(z^{-1}))] \nonumber \\
&& + \sum_{i \in (i_{bestbid}, i_{bestask})}  z^{-U} [f_i^*(q)(G^{f,i,q}(z)-1)+ g_i^*(q)(G^{g,i,q}(z)-1)]. \nonumber
\end{eqnarray}

Moreover, since for $z > 1$ we have $G^{f/g,i,q}(z^{-1}) < 1$ and $G^{f/g,i,q}(z) > 1$, we obtain 
\begin{eqnarray}
\tilde{\mathcal{Q}}V(q) &\leq& \sum_{i: i \leq i_{bestbid}, q_i < -U} z^{-q_i-U} (G^{g,i,q}(z)-1)\big(g_i^*(q) - f_i^*(q) \frac{1-G^{f,i,q}(z^{-1})}{G^{g,i,q}(z)-1}\big) \nonumber \\
&& + \sum_{i: i \geq i_{bestask}, q_i > U} z^{q_i-U} (G^{f,i,q}(z)-1)\big(f_i^*(q) - g_i^*(q) \frac{1-G^{g,i,q}(z^{-1})}{G^{f,i,q}(z)-1}\big)\nonumber \\
&& + \sum_{i: i \leq i_{bestbid},q_i \geq -U} g_i^*(q) G^{g,i,q}(z) + \sum_{i: i \geq i_{bestask},q_i \leq U} f_i^*(q) G^{f,i,q}(z) \nonumber \\
&& + \sum_{i \in (i_{bestbid}, i_{bestask})} [f_i^*(q)G^{f,i,q}(z)+ g_i^*(q)G^{g,i,q}(z)] \nonumber \\
&\leq& \sum_{i: i \leq i_{bestbid}, q_i < -U} z^{-q_i-U} (G^{g,i,q}(z)-1)\big(g_i^*(q) - f_i^*(q) \frac{1-G^{f,i,q}(z^{-1})}{G^{g,i,q}(z)-1}\big) \nonumber \\
&& + \sum_{i: i \geq i_{bestask}, q_i > U} z^{q_i-U} (G^{f,i,q}(z)-1)\big(f_i^*(q) - g_i^*(q) \frac{1-G^{g,i,q}(z^{-1})}{G^{f,i,q}(z)-1}\big) \nonumber \\
&& + \sum_{i} [f_i^*(q) G^{f,i,q}(z)\one_{i > i_{bestbid}} +  g_i^*(q) G^{g,i,q}(z)\one_{i < i_{bestask}}]. \nonumber
\end{eqnarray}

Now note that $$\sup_q [f_i^*(q) G^{f,i,q}(z)\one_{i > i_{bestbid}} +  g_i^*(q) G^{g,i,q}(z)\one_{i < i_{bestask}}]$$ is an increasing function of $z$. Thus by Assumption \ref{ass7}, we can find $z'>1$ such that for any $z \leq z'$, $q \in \Omega$ and $i \in [-K,...,K]$: $$ \big(f_i^*(q) G^{f,i,q}(z)\one_{i > i_{bestbid}} +  g_i^*(q) G^{g,i,q}(z)\one_{i < i_{bestask}}\big) < L.$$ From Assumption \ref{ass8}, we see that we can find some $\tilde{z}$ with
$1<\tilde{z}\leq z'$ such that for any $(q,i), q_i < -U,i \leq i_{bestbid}$, for any $1<z\leq \tilde{z}$:
\begin{equation}
g_i^*(q) - f_i^*(q) \frac{1-G^{f,i,q}(z^{-1})}{G^{g,i,q}(z)-1} < -r,
\end{equation}
and for any $(q,i),q_i > U,i \geq i_{bestask}$, for any $1<z\leq \tilde{z}$:
\begin{equation}
f_i^*(q) - g_i^*(q) \frac{1-G^{g,i,q}(z^{-1})}{G^{f,i,q}(z)-1} < -r.
\end{equation}
Thus taking $z$ in the definition of the function $V$ satisfying $2Kz^{-U}>1$ (so that the function $V$ is coercive) and $1<z\leq \tilde{z}$, we obtain 
\begin{eqnarray}
\tilde{\mathcal{Q}}V(q) &\leq& -r \sum_{i: i \leq i_{bestbid}, q_i < -U} z^{-q_i-U} (G^{g,i,q}(z)-1) \nonumber \\
&& -r \sum_{i: i \geq i_{bestask}, q_i > U} z^{q_i-U} (G^{f,i,q}(z)-1) + 2KL. \nonumber \\
 &\leq& -rB_g(z) \sum_{i: i \leq i_{bestbid}, q_i < -U} z^{-q_i-U} \nonumber \\
&& -rB_f(z) \sum_{i: i \geq i_{bestask}, q_i > U} z^{q_i-U} + 2KL. \nonumber
\end{eqnarray}

By Assumption \ref{ass8p2}, $B:=\min (B_g(z),B_f(z))> 0$. Therefore we get 

\begin{eqnarray}
\tilde{\mathcal{Q}}V(q) &\leq& -rB \sum_{i: |q_i| > U}  z^{|q_i|-U} + 2KL  \nonumber \\
&\leq& -rB \sum_i  z^{|q_i|-U} + 2K(L+rB)   \nonumber \\
&=& -rBV(q) + 2K(L+rB). \nonumber
\end{eqnarray}
Finally, remark that in our setting, any compact set included in $\Omega$ is finite. A singleton being a petite set and a finite union of singletons remaining a petite set, see Proposition 5.5.5 in \citet*{Meyn2009}, we get that all the compact sets are petite. Therefore by Theorem 6.1 in \citet*{meyn1993stability}, $q(t)$ is non-explosive and V-uniformly ergodic. Furthermore, by Theorem 4.2 in \citet*{meyn1993stability} it is positive Harris recurrent.

\subsection{Proof of Theorem \ref{theo2}}

For some $1<z\leq z^*$, set again $$V(q)=\sum_{i=-K,i\neq 0}^K{z^{|q_i|-U}}.$$ We write $\tilde{\mathcal{P}}_{q,q'}$ the transition probability from $q$ to $q'$. In the same way as in the preceding proof, we have 
\begin{eqnarray}
\Delta V(q) &:=& \sum_{q' \in \Omega} \tilde{\mathcal{P}}_{q,q'}(V(q')-V(q)) \nonumber \\
&=& \sum_{i \leq i_{bestbid}}  z^{-q_i-U} [b_i^*(q) (G^{g,i,q}(z)-1) - a_i^*(q) (1-G^{f,i,q}(z^{-1}))] \nonumber \\
&& + \sum_{i \geq i_{bestask}}  z^{q_i-U} [a_i^*(q) (G^{f,i,q}(z)-1) - b_i^*(q) (1-G^{g,i,q}(z^{-1}))] \nonumber \\
&& + \sum_{i \in (i_{bestbid}, i_{bestask})}  z^{-U} [a_i^*(q)(G^{f,i,q}(z)-1)+ b_i^*(q)(G^{g,i,q}(z)-1)]. \nonumber
\end{eqnarray}

Following the same method as in the proof of Theorem \ref{theo1}, we can easily find $1<z'\leq z^*$ and $B>0$ such that taking $z=z'$ in the definition of $V$, it is coercive and we get 
\begin{eqnarray}
\Delta V(q) &\leq& -rV(q) + B. \nonumber
\end{eqnarray}

Now define the set $C := \{q,~rV(q) \leq 2B \}$, C is obviously a finite set and is therefore petite. Furthermore, we have 
\begin{equation}
\Delta V(q) \leq -\frac{r}{2}V(q) + B\one_{q \in C}. \nonumber
\end{equation}
Thus by Theorem 16.1.2 in \citet*{Meyn2009}, $q(n)$ is V-uniformly ergodic.\\

Eventually, the fact that the chain is positive Harris recurrent is deduced from Theorem 9.1.8 together with Theorem 15.0.1 in \citet*{Meyn2009}. 

\subsection{Proof of Theorem \ref{theo3}}

For ease of notation, we write $V$ instead of $V^z$. Let
$Q^{inc}$, $Q^{dec}$, $Q_K$, $Q_{-K}$ be four random variables such that $Q^{inc}\sim \pi^{inc}$,  $Q^{dec}\sim \pi^{dec}$, $Q_{K}\sim \pi_{K}$ and $Q_{-K}\sim \pi_{-K}$. We define 
\begin{eqnarray}
u^*(q) &=& \frac{u(q)}{\sum_i[f_i^*(q)+g_i^*(q)]+u(q)+d(q)}\nonumber \\
d^*(q) &=& \frac{d(q)}{\sum_i[f_i^*(q)+g_i^*(q)]+u(q)+d(q)}\nonumber \\
n^*(q) &=& \frac{\sum_i[f_i^*(q)+g_i^*(q)]}{\sum_i[f_i^*(q)+g_i^*(q)]+u(q)+d(q)}\nonumber \\
\E_{K} &=& \E[z^{Q_{K} - U}]\nonumber \\
\E_{-K} &=& \E[z^{Q_{-K} - U}]\nonumber \\
\E_{\pi^{inc}} &=& \E[V([Q^{inc},c])]\nonumber \\
\E_{\pi^{dec}} &=& \E[V([Q^{dec},c])].\nonumber
\end{eqnarray}
Remarking that $V([q,c])$ does not depend on $c$, we write from now on $V(q)$ instead of $V([q,c])$. Moreover, we set $\mathcal{P}_{[q,c],[q',c']}$ as the transition probability from state $[q,c]$ to state $[q',c']$ and $\tilde{\mathcal{P}}_{q,q'}$ as the transition matrix of the embedded chain $q(n)$ when $u=d=0$. Using the form of the infinitesimal generator $\mathcal{Q}$, we deduce
\begin{eqnarray}
\Delta V([q,c]) &:=& \sum_{(q',c') \in \Omega \times \{-\alpha,\alpha\}} \mathcal{P}_{[q,c],[q',c']}(V(q')-V(q)) \nonumber \\
&=& n^*(q)  \sum_{q' \in \Omega} \tilde{\mathcal{P}}_{q,q'}(V(q')-V(q)) \nonumber \\
&& + u^*(q) \big[(1-\theta^{reinit})(\E_{K} - z^{|q_{-K}|-U}) + \theta^{reinit}(\E_{\pi^{inc}} - V(q))\big] \nonumber \\
&& + d^*(q) \big[(1-\theta^{reinit})(\E_{-K} - z^{|q_{K}|-U}) + \theta^{reinit}(\E_{\pi^{dec}} - V(q))\big]. \nonumber
\end{eqnarray}

By Assumption \ref{ass10}, we have
\begin{eqnarray}
\Delta V([q,c]) &\leq& n^*(q)  \sum_{q' \in \Omega} \tilde{\mathcal{P}}_{q,q'}(V(q')-V(q)) + 2L^K + 2L^\pi. \nonumber
\end{eqnarray}
Then as in the proof of Theorem \ref{theo2}, we can find $1<z'\leq z^*$ and $B>0$ such that taking $z=z'$ in the definition of $V$, it is coercive and we get 
\begin{eqnarray}
\Delta V(q) &\leq& -n^*(q)rV(q) + Bn^*(q)+2L^K+ 2L^\pi. \nonumber
\end{eqnarray}
Moreover, Assumption \ref{ass11} ensures that for any $q$ except those belonging to the finite set $W$, $n^*(q)>M$ with $M\in(0,1]$. Consequently, 
\begin{eqnarray}
\Delta V(q) &\leq& -MrV(q) +B+2L^K+ 2L^\pi+\nu\one_{[q,c]\in W\times\{-\alpha,\alpha\}}, \nonumber
\end{eqnarray}
with $\nu>0$. Now define the set $C := \{[q,c],~MrV(q) \leq 2B+4L^K+ 4L^\pi \}$. Being finite, $C\cup W$ is a petite set and we have 

\begin{equation}
\Delta V([q,c]) \leq -\frac{Mr}{2}V([q,c]) + (B+2L^K+ 2L^\pi+\nu)\one_{[q,c] \in C\cup W}. \nonumber
\end{equation}
Hence from Theorem 16.1.2 in \citet*{Meyn2009}, $Y(n)$ is V-uniformly ergodic.\\

Eventually, the fact that the chain is positive Harris recurrent is deduced from Theorem 9.1.8 together with Theorem 15.0.1 in \citet*{Meyn2009}.

\subsection{Proof of Theorem \ref{theo_scaling}}

\subsubsection{Preliminary lemma}

We start with the following preliminary lemma.

\begin{lemma}\label{lem1}
For the Markov chain $(q(n),c(n),\tau(n))$, the Cartesian product of any finite set included in $\Omega \times \{-\alpha, \alpha\}$ and $\mathbb{R}^+$ is petite. 
\end{lemma}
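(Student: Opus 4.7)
The plan hinges on a single structural observation: the one-step transition kernel of the Markov chain $(q(n), c(n), \tau(n))$ is independent of the $\tau$-coordinate of the source state. Indeed, in the standard construction of the embedded chain of a continuous-time Markov jump process, $(q(n+1), c(n+1))$ is drawn from $\tilde{\mathcal{P}}((q(n), c(n)), \cdot)$, and the waiting time $\tau(n+1)$ is an independent exponential variable with rate $\lambda(q(n)) := \sum_i [f_i^*(q(n)) + g_i^*(q(n))] + u(q(n)) + d(q(n))$. Neither draw depends on the current $\tau(n)$. Consequently, for each $(q_0, c_0) \in \Omega \times \{-\alpha, \alpha\}$ there is a single probability measure $\mu_{(q_0, c_0)}$ on $\Omega \times \{-\alpha, \alpha\} \times \mathbb{R}^+$ such that $P^1((q_0, c_0, \tau), \cdot) = \mu_{(q_0, c_0)}$ for every $\tau > 0$.

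The first step is to show that each ``vertical fibre'' $\{(q_0, c_0)\} \times \mathbb{R}^+$ is $1$-small. Take the sampling distribution $a$ on $\mathbb{N}$ concentrated at $n=1$, and the minorizing measure $\nu := \mu_{(q_0, c_0)}$, which is a non-trivial probability measure since $\lambda(q_0) > 0$ by the non-absorbing-state condition in Assumption \ref{ass1}. Then $\sum_n a(n) P^n((q_0, c_0, \tau), \cdot) = \mu_{(q_0, c_0)} \geq \nu$ uniformly in $\tau \in \mathbb{R}^+$, so the fibre is $1$-small, and a fortiori petite.

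The second step is to invoke the stability of petiteness under finite unions, Proposition 5.5.5(iii) in \citet*{Meyn2009}. Writing any Cartesian product $F \times \mathbb{R}^+$ with $F$ finite as $\bigcup_{(q_0, c_0) \in F} \{(q_0, c_0)\} \times \mathbb{R}^+$, a finite union of petite sets, we conclude that $F \times \mathbb{R}^+$ is petite.

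There is no real obstacle: the argument is entirely structural. The only subtlety worth flagging is that the state space now carries a continuous coordinate ($\tau \in \mathbb{R}^+$), so petiteness cannot be inherited simply from the discreteness of $\Omega \times \{-\alpha, \alpha\}$ as in the earlier theorems; what is genuinely used here is the Markovian ``amnesia'' that the next draw makes about the current waiting time, which collapses the $\tau$-fibre to a single minorizing measure after one step.
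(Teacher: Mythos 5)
Your proof is correct and follows essentially the same route as the paper's: both exploit the fact that the one-step kernel from $(q,c,\tau)$ does not depend on $\tau$, so each fibre $\{(q,c)\}\times\mathbb{R}^+$ is $1$-small with the sampling measure concentrated at $n=1$, and then conclude by finite unions of petite sets (Proposition 5.5.5 in Meyn and Tweedie). The only cosmetic difference is that the paper notes the kernel depends only on $q$, while you keep the dependence on $(q,c)$; this changes nothing in the argument.
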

\begin{proof}
We first show that for any $q \in \Omega$ and $c \in \{-\alpha, \alpha\}$, the set ${q} \times {c} \times \mathbb{R}^+$ is petite (actually small). We define the measure $\nu_{q,c,\tau}$ which is so that for any $B \in \mathcal{B}(\Omega \times  \{-\alpha, \alpha\} \times \mathbb{R}^+)$, $\nu_{q,c,\tau}(B)$ is the transition probability from $[q,c,\tau]$ to $B$ in a single step: $$\nu_{q,c,\tau}(B) = \mathcal{P}_{[q,c,\tau],B}.$$
 
Recall that in our framework, the transition probabilities from $(q(n),c(n),\tau(n))$ depend only on the value of $q(n)$. So we can write $\nu_{q,c,\tau}(B)$ as $\nu_{q}(B)$. In the sense of Equation (5.43) in \citet*{Meyn2009}, the transition probability $\mathcal{P}_{[q,r,\tau], B}$ can be seen as a sampling kernel for the Markov chain $(q(n),c(n),\tau(n))$, using the Dirac measure at point $1$ on $\mathbb{Z}^+$ as sampling measure. Moreover, for any $\tau \in \mathbb{R}^+$ and any $B \in \mathcal{B}(\Omega \times  \{-\alpha, \alpha\} \times \mathbb{R}^+)$, we have
$$\mathcal{P}_{[q,c,\tau], B}\geq \nu_q(B).$$

Therefore the set ${q} \times {c} \times \mathbb{R}^+$ is petite. Then, as the union of two petite sets remains petite, see Proposition 5.5.5 in \citet*{Meyn2009}, we have the result.
\end{proof}

\subsubsection{A law of large numbers}

In the next proposition, we give a law of large numbers for the inter-arrival times, which is the key element to establish the diffusive behavior of the price in calendar time. Within the proof of this proposition, we show that $(q(n),c(n),\tau(n))$ is positive Harris recurrent.

\begin{proposition} \label{prop1}
Let $\tau_i$ be the inter-arrival time between the $i$-th and the $i-1$-th jumps of the Markov process $X$.
Under Assumptions \ref{ass6}, \ref{ass7}, \ref{ass8p2}, \ref{ass8add}, \ref{ass10}, \ref{ass11} and \ref{ass12}, almost surely, we have $$\frac{1}{n}\sum_{i=1}^n\tau_i  \underset{n \to \infty}{\to} \E_{\pi^{**}}[\tau(1)],$$ with $\pi^{**}$ the invariant distribution of $(q(n),c(n),\tau(n))$.
\end{proposition}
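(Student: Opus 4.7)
The plan is to extend the ergodicity argument of Theorem \ref{theo3} from $Y(n)=(q(n),c(n))$ to the augmented chain $(q(n),c(n),\tau(n))$, and then invoke the standard strong law of large numbers for positive Harris recurrent Markov chains applied to the coordinate functional $(q,c,\tau)\mapsto \tau$.

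First I would verify that $(q(n),c(n),\tau(n))$ is indeed a Markov chain. Conditional on the LOB state $q(n-1)$ after the $(n-1)$-th jump, the next inter-arrival time $\tau(n)$ is exponentially distributed with rate $\Lambda(q(n-1)):=\sum_i[f_i^*(q(n-1))+g_i^*(q(n-1))]+u(q(n-1))+d(q(n-1))$, and $(q(n),c(n))$ is determined by the usual embedded transition probability, independently of $\tau(n)$. In particular, the transition kernel from $(q,c,\tau)$ depends only on $q$; this is exactly the structure exploited in Lemma \ref{lem1}. Thus any finite subset of $\Omega\times\{-\alpha,\alpha\}$, crossed with $\mathbb{R}^+$, is petite for this chain.

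Next I would reuse the Lyapunov function $V([q,c])=\sum_{i\neq 0}z^{|q_i|-U}$ (with $z$ chosen as in the proof of Theorem \ref{theo3}), lifting it trivially to the new state space by setting $\bar V([q,c,\tau]):=V([q,c])$. Because the transitions of $(q(n),c(n),\tau(n))$ depend on the current state only through $q$, and $\bar V$ does not depend on $(c,\tau)$, the one-step drift $\Delta \bar V([q,c,\tau])=\Delta V([q,c])$ is identical to the one computed in the proof of Theorem \ref{theo3}. Hence there exist $\gamma>0$, $B'>0$, and a finite set $C\cup W\subset\Omega\times\{-\alpha,\alpha\}$ with
\begin{equation*}
\Delta\bar V([q,c,\tau])\leq -\gamma\,\bar V([q,c,\tau]) + B'\,\one_{[q,c]\in C\cup W}.
\end{equation*}
Combined with Lemma \ref{lem1}, which provides that $(C\cup W)\times\mathbb{R}^+$ is petite, Theorem 16.1.2 together with Theorems 9.1.8 and 15.0.1 in \citet*{Meyn2009} yield that $(q(n),c(n),\tau(n))$ is $\bar V$-uniformly ergodic and positive Harris recurrent. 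This defines the invariant distribution $\pi^{**}$.

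Finally, I would apply the strong law of large numbers for positive Harris recurrent chains (Theorem 17.0.1 in \citet*{Meyn2009}) to the functional $h([q,c,\tau])=\tau$. The only integrability condition to check is $\E_{\pi^{**}}[\tau(1)]<\infty$. Under the stationary law, conditional on $q$, $\tau(1)$ is exponential with rate $\Lambda(q)$, and Assumption \ref{ass12} gives $\Lambda(q)>m$ uniformly in $q$; hence $\E_{\pi^{**}}[\tau(1)]\leq 1/m<\infty$. This yields the claimed almost sure convergence $\frac{1}{n}\sum_{i=1}^n\tau_i\to \E_{\pi^{**}}[\tau(1)]$. The main obstacle in the whole argument is really the petiteness of sets of the form $F\times\mathbb{R}^+$, since $\tau$ lives in a continuous space; but this has already been taken care of by Lemma \ref{lem1}, so the remaining work is essentially to check that the Lyapunov drift of Theorem \ref{theo3} transfers unchanged to the augmented chain.
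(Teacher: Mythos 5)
Your proof is correct and follows essentially the same route as the paper: Markov property of the augmented chain, a Foster--Lyapunov drift condition combined with the petiteness of finite sets crossed with $\mathbb{R}^+$ from Lemma \ref{lem1}, Theorems 15.0.1 and 9.1.8 of \citet*{Meyn2009} for positive Harris recurrence, and Theorem 17.0.1 for the law of large numbers. The only (harmless) difference is that the paper takes $V^*([q,c,t])=V([q,c])+t$ rather than your $\bar V([q,c,\tau])=V([q,c])$, using Assumption \ref{ass12} to absorb the extra term $\E[\tau]-t$ into the drift, so that the integrability of $\tau$ under $\pi^{**}$ is built into the Lyapunov bound; you instead verify $\E_{\pi^{**}}[\tau(1)]\leq 1/m<\infty$ directly from the conditional exponential law, which is equally valid.
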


\begin{proof}
First we show that the Markov chain $(q(n),c(n),\tau(n))$ is positive Harris recurrent. For the Markov chain $(q(n),c(n))$, we have already proved that a coercive function $V$ can be found, such that the following drift condition is satisfied for some $a>0$ and $L<\infty$: 
$$\Delta V([q,c]) \leq -aV([q,c]) + L.$$ Now, for the Markov chain $(q(n),c(n),\tau(n))$, take $$V^*([q,c,t]) = V([q,c]) + t.$$ With obvious notation, we have 
\begin{eqnarray}
\Delta V^*([q,c,t])&=& \Delta V([q,c]) + \E_{[q,c,t]}[\tau] - t \nonumber\\
&\leq& -aV([q,c]) + L + \E_x[\tau] - t. \nonumber
\end{eqnarray}
Taking $a' = min(a,1)$ and $L' = L+1/m$, using Assumption \ref{ass12}, we get 
\begin{eqnarray}
\Delta V^*([q,c,t]) &\leq& -aV([q,c]) + L + 1/m - t \nonumber \\
&\leq& -a'[V([q,c])+t] + L + 1/m \nonumber \\
&\leq& -a'V^*([q,c,t]) + L'.\nonumber
\end{eqnarray}
Now let $C = \{([q,c,t]),~a'V^*([q,c,t]) \leq 2L'\}$. We have 
\begin{eqnarray}
\Delta V^*([q,c,t]) &\leq& -\frac{a'}{2}V^*([q,c,t]) + L'\one_{[q,c,t] \in C}.\nonumber
\end{eqnarray}
According to Lemma \ref{lem1}, the set $C$ is petite. Thus we can apply Theorem 15.0.1 in \citet*{Meyn2009} to deduce that the Markov chain $(q(n),r(n),\tau(n))$ is positive recurrent and thus admits an invariant measure. \\

Now remark that the function $V^*$ is unbounded off petite sets (using Lemma \ref{lem1} together with the fact that any subset of a petite set is itself petite). Consequently, Theorem 9.1.8 in \citet*{Meyn2009} enables us to obtain that the Markov chain $(q(n),c(n),\tau(n))$ is Harris recurrent. Therefore it is positive Harris recurrent.\\

We end the proof thanks to Theorem 17.0.1 from \citet*{Meyn2009}.
\end{proof}

\subsubsection{End of the proof of Theorem \ref{theo_scaling}}

We have $$\tilde{S}^{(n)}(t)= \frac{Z(\lfloor tn/\E_{\pi^{**}}[\tau(1)]\rfloor)}{\sqrt{n}} + \big(\frac{Z(N(nt))}{\sqrt{n}} - \frac{Z(\lfloor tn/\E_{\pi^{**}}[\tau(1)]\rfloor)}{\sqrt{n}}\big).$$ According to Proposition \ref{prop1}, the sequence of processes $N(nt)/n$ converges to $t/\E_{\pi^{**}}[\tau(1)]$. Moreover, the limit of $Z(\lfloor tn/\E_{\pi^{**}}[\tau(1)]\rfloor)/\sqrt{n}$ is continuous. Thus, using Skorohod representation theorem together with continuity properties in Skorohod topology, see Proposition VI.2.1 in \cite{jacod2003limit}, we get that the second term on the right hand side of the above equality tends to zero. Finally, from Theorem \ref{theo41}, we get $$ \frac{Z(\lfloor tn/\E_{\pi^{**}}[\tau(1)]\rfloor)}{\sqrt{n}} \overset{n \to \infty}{\to} \frac{\sigma}{\sqrt{\E_{\pi^{**}}[\tau(1)]}}B_t.$$ Combining these two results, we obtain the weak convergence of the rescaled price process to a Brownian motion with variance $\sigma^2/\E_{\pi^{**}}[\tau(1)]$, which concludes the proof.

\bibliography{ref}	

\end{document}